\documentclass[12pt]{article}
\usepackage{amssymb}
\usepackage{color}
\usepackage{indentfirst}
\usepackage{amsmath}
\usepackage{enumerate}
\usepackage{cite}
\usepackage{ntheorem}
\newtheorem{theorem}{Theorem}
\newtheorem{lemma}{Lemma}
\newtheorem{corollary}{Corollary}

\newtheorem{definition}{Definition}

\newtheorem{example}{Example}

\newtheorem*{proof}{Proof}

\setlength{\evensidemargin}{0.135in}
\setlength{\oddsidemargin}{0.135in} \setlength{\textwidth}{6in}
\setlength{\topmargin}{0in} \setlength{\textheight}{9in}

\begin{document}
\title{The $c$-differential properties of a class of power functions}
\author{\small{ Huan Zhou$^{1}$}, Xiaoni Du\thanks{Corresponding author.}\ $^{1,2,3}$, Wenping Yuan$^{1}$, Xingbin Qiao$^{1}$\\
\small{${}^1$College of Mathematics and Statistics, Northwest Normal University, }\\
\small{ Lanzhou, 730070, China}\\
\small{${}^2$Key Laboratory of Cryptography and Data Analytics,}\\
\small{ Northwest Normal University, 730070, Lanzhou, China }\\
\small{${}^3$Gansu Provincial Research Center for Basic Disciplines of Mathematics and Statistics,}\\
\small{Northwest Normal University, 730070, Lanzhou, China}\\
\small{Email: nwnuzh@126.com}}
\date{}
\maketitle

\begin{abstract}
Power functions with low $c$-differential uniformity have been widely studied not only because of their strong resistance to multiplicative differential attacks, but also low implementation cost in hardware. Furthermore, the $c$-differential spectrum of a function gives a more precise characterization of its $c$-differential properties. Let $f(x)=x^{\frac{p^n+3}{2}}$ be a power function over the finite field $\mathbb{F}_{p^{n}}$, where $p\neq3$ is an odd prime and $n$ is a positive integer. In this paper, for all primes $p\neq3$,  by investigating certain character sums with regard to elliptic curves and computing the number of solutions of a system of equations over $\mathbb{F}_{p^{n}}$, we determine explicitly the $(-1)$-differential spectrum of $f$ with a unified approach. We show that if $p^n \equiv 3 \pmod 4$, then $f$ is a differentially $(-1,3)$-uniform  function except for $p^n\in\{7,19,23\}$ where $f$ is an APcN function, and if $p^n \equiv 1 \pmod 4$, the $(-1)$-differential uniformity of $f$ is equal to $4$. In addition, an upper bound of the $c$-differential uniformity of $f$ is also given.

\textbf{Key words:} power function, elliptic curve, $c$-differential uniformity, $(-1)$-differential spectrum.

\end{abstract}

\section{Introduction}\label{A}

Let $\mathbb{F}_{p^n}$ be the finite field with $p^n$ elements, where $p$ is a prime and $n$ is a positive integer. The multiplicative cyclic group of the finite field $\mathbb{F}_{p^n}$ is denoted by $\mathbb{F}_{p^n}^{*}$. Let $\mathbb{F}_{p^n}[x]$ denote the polynomial ring over $\mathbb{F}_{p^{n}}$. Any function $f:\mathbb{F}_{p^n}\rightarrow\mathbb{F}_{p^n}$ can be uniquely represented as a univariate polynomial of degree less than $p^{n}$. Therefore, $f$ can always be seen as a polynomial in $\mathbb{F}_{p^n}[x]$. A polynomial $f\in \mathbb{F}_{p^n}[x]$ is called a permutation polynomial of $\mathbb{F}_{p^n}$ if the induced mapping $x\mapsto f(x)$ is a permutation over $\mathbb{F}_{p^n}$. Permutation polynomials are a very important class of polynomials as they have applications in coding theory and cryptography. Nowadays, designing infinite classes of permutation polynomials over finite fields with good cryptographic properties remains an interesting research topic.

Substitution boxes (S-boxes for short) are important nonlinear building blocks in  block ciphers, which can be seen as  permutation functions over finite fields. To quantify the ability of S-boxes to resist the differential attack\cite{BS1991}, one of the most powerful attacks on block ciphers, Nyberg\cite{N1994} introduced the notion of differential uniformity. More importantly, to estimate the resistance against some variants of differential cryptanalysis, the differential spectrums\cite{BCC2010} of S-boxes are also shown to be of great interest.

In \cite{BCJW2002}, Borisov et al. proposed a new differential on block ciphers, called a multiplicative differential, it is an extension of differential cryptanalysis, which helps to identify the weakness of IDEA ciphers. Motivated by the concept of multiplicative differential, Ellingsen et al. in \cite{EFRST2020} first introduced the (multiplicative) $c$-derivative.
\begin {align*}
_{c}D_{a}f(x)=f(x+a)-cf(x),~\mathrm{for~all}~x \in\mathbb{F}_{p^n}.
\end {align*}
In the same paper they presented a generalized notion of differential uniformity, the so-called $c$-differential uniformity.
\begin{definition}\cite{EFRST2020}
Let $\mathbb{F}_{p^n}$ be the finite field with $p^n$ elements, where $p$ is a prime and $n$ is a positive integer. For a function $f:\mathbb{F}_{p^n}\rightarrow\mathbb{F}_{p^n}$, and $a,b,c\in \mathbb{F}_{p^n}$. Let ${}_{c}\Delta_{f}(a,b) =\#\{x\in \mathbb{F}_{p^n}:f(x+a)-cf(x)=b\}$. The $c$-differential uniformity of $f$ is defined as
$$_{c}\Delta_{f}=\max\{{}_{c}\delta_{f}(a,b):a,b\in \mathbb{F}_{p^n},~ \mathrm{and}~a\neq0~\mathrm{if}~c=1\}.$$
\end{definition}

If $_c\Delta_f=\delta$, then $f$ is called a differentially $(c,\delta)$-uniform function. Especially, $f$ is called a perfect $c$-nonlinear (PcN) function if $_{c}\Delta_{f}=1$, and an almost perfect $c$-nonlinear (APcN) function if $_{c}\Delta_{f}=2$. It is clear that if $c=1$ and $a\neq0$, the $c$-differential uniformity becomes the usual differential uniformity and one use the corresponding notation by omit the symbol $c$. The smaller the value $_{c}\Delta_{f}$ is, the better $f$ resists against multiplicative differential attacks. Thus, the research on cryptographic functions with low $c$-differential uniformity has been a hot issue in recent years, the readers can refer to \cite{EFRST2020,MRSYZ2021,MMM2022,PLZZ2022,RSY2022,
WLZ2021,WZH2022,TLWZTJ2023,Y2022,ZH2021,YZ2022,YMT,WMLZ2023} and the references therein.

For a power function $f(x)=x^d$ with a positive integer $d$, one can easily see that $_{c}\delta_{f}(a,b)={}_{c}\delta_{f}(1,b/a^{d})$, for all $a\in \mathbb{F}_{p^n}^{*}$ and $b\in \mathbb{F}_{p^n}$. For simplicity, denoted ${}_{c}\delta_{f}(1,b)$ by $_{c}\delta_{f}(b)$ with $b\in \mathbb{F}_{p^n}$. More precisely, it was proved in \cite[Lemma 1]{MRSYZ2021} that the $c$-differential uniformity of $f(x)=x^d$ is given by
\begin{equation}\label{cf}
_c\Delta_f=\max\big\{ \{_c\delta_f(b) : b \in \mathbb{F}_{p^n} \} \cup \{\gcd(d,p^n-1)\} \big\}.
\end{equation}

Moreover, as a generalization of the differential spectrum, the $c$-differential spectrum of a power function is defined as follows.
\begin{definition}\cite{WZH2022}
Let $f(x)=x^{d}$ with $c$-differential uniformity $_{c}\Delta_{f}$. Denote $_{c}\omega_{i}=\#\{b\in\mathbb{F}_{p^n}: {}_{c}\delta_{f}(b)=i\}$, for each $0\leq i\leq {}_{c}\Delta_{f}$. Then the $c$-differential spectrum of $f$ is defined as the multi-set
$$
\mathbb{S}=\{_{c}\omega_{i}>0:0\leq i\leq{}_{c}\Delta_{f}\}.
$$
\end{definition}

Generally speaking, it is difficult to determine the $c$-differential spectrum of power functions. To the best of our knowledge, only a few classes of power functions over odd characteristic finite fields have a nontrivial $c$-differential spectrum. The known results on power functions $f$ over $\mathbb{F}_{p^n}$, for which $c$-differential spectrum has been determined are summarized in Table \ref{biao1}, where Tr$_{n}(\cdot)$ is the absolute trace mapping from $\mathbb{F}_{2^n}$ to $\mathbb{F}_{2}$.
\begin{table}
  \scriptsize
  \renewcommand{\arraystretch}{1.5}
  \caption{Power functions $f(x)=x^{d}\in \mathbb{F}_{p^n}[x]$ with known $c$-differential spectrum}\label{biao1}
  \centering
\begin{tabular}{|c|c|c|c|}
  \hline
  $d$ & Conditions & $_{c}\Delta_{f}$ & Ref. \\
  \hline
  $2^{3m}+2^{2m}+2^{m}-1$ & $p=2,~n=4m,~0,1\neq c\in \mathbb{F}_{2^n},~c^{1+2^{2m}}=1$~ & 2 & \cite{TLWZTJ2023} \\
  \hline
  $p^{k}+1$&~$p$~is odd,~$\gcd(n,k)=e,~1\neq c\in \mathbb{F}_{p^{e}}$~and~$\frac{n}{e}$~is odd& 2&\cite{WZH2022}\\
  \hline
  $p^{k}+1$&~$p$~is odd,~$\gcd(n,k)=e,~c\notin \mathbb{F}_{p^{e}}$,~$n$~is even,~$k=\frac{n}{2}$~&2& \cite{WZH2022} \\
  \hline
  $2^{n}-2$&~$p=2,c\neq0$,~Tr$_{n}(c)=$Tr$_{n}(c^{-1})=1$&2& \cite{YZ2022} \\
  \hline
  $2^{n}-2$&~$p=2,c\neq0$,~Tr$_{n}(c)=0$~or~Tr$_{n}(c^{-1})=0$&3& \cite{YZ2022} \\
  \hline
  $p^{n}-2$&~$p$~is odd,~$c\neq0,1,4,4^{-1}$~& 2~or~3 & \cite{YZ2022} \\
  \hline
  $3^{n}-3$&~$p=3,~c=-1$,~any~$n$~ & 4~or~6 & \cite{YZ2022} \\
  \hline
  $\frac{p^k+1}{2}$&~$p$~is odd, ~$c=-1,~\gcd(n,k)=1$,~$\frac{2n}{\gcd(2n,k)}$~is even& $\frac{p+1}{2}$ & \cite{YZ2022} \\
  \hline
  $\frac{p^n+1}{2}$&~any $p$,~$c\neq-1$~&2,4,~$\frac{p^n+3}{4}$~or~$\frac{p^n+5}{4}$& \cite{RSY2022} \\
  \hline
  $\frac{5^n-3}{2}$&$p=5,~c=-1,~n\geq2$&2& \cite{YZ2022}~and~\cite{PLZZ2022} \\
  \hline
  $\frac{p^n-3}{2}$&$c=-1,~p^n~\equiv3~(\bmod~4),~p^{n}>3$&2~or ~4&\cite{YMT}\\
  \hline
  $\frac{3^n+3}{2}$&~$p=3$,~$c=-1$,~$n\geq2$~is even~& 2 & \cite{YZ2022} \\
  \hline
  $\frac{p^n+3}{2}$&~$c=-1$,~$p^n~\equiv1~(\bmod~4)$,~$p\neq3$~& 4 & This~paper \\
  \hline
  $\frac{p^n+3}{2}$&~$c=-1$,~$p^n~\equiv3~(\bmod~4)$,~$p\neq3$~& 3 & This~paper \\
  \hline
\end{tabular}
\end{table}

Throughout this paper, let $f(x)=x^{\frac{p^{n}+3}{2}}\in\mathbb{F}_{p^n}[x]$, where $p$ is an odd prime and $n$ is a positive integer. We should mention that when $p=3$, $f$ is P$c$N if $n\geq3$ is odd, and is AP$c$N if $n\geq2$ is even \cite{MRSYZ2021} with its $(-1)$-differential spectrum being given in \cite[Theorem 9]{YZ2022}. When $p>3$, the $(-1)$-differential uniformity of $f$ was discussed by Mesnager et al. in \cite[Theorem 11]{MRSYZ2021} and they proved that $_{-1}\Delta_{f}\leq 4$ if $p^n \equiv1 \pmod 4$ and $_{-1}\Delta_{f}\leq3$ if $p^n\equiv3 \pmod4$.

Inspired by the works above, in this paper, we mainly study the $(-1)$-differential spectrum of $f(x)=x^{\frac{p^{n}+3}{2}}$ for all positive integer $n$ when $p\neq3$. More precisely, we prove that $_{-1}\Delta_{f}=3$ if $p^n\equiv3 \pmod 4$ except for $p\in\{7,19,23\}$ where $f$ is an APcN function, and $_{-1}\Delta_{f}=4$ if $p^n\equiv 1 \pmod 4$, that is, the upper bound of the $(-1)$-differential uniformity of $f$ given by Mesnager et al. can be achieved.

The rest of this paper is organized as follows. In
Section \ref{B}, we introduce some basic notation about quadratic character and results on quadratic character sums, which will be employed in the sequel. In Section \ref{C}, we first determine the $(-1)$-differential spectrum of $f$, and some examples are also presented. In Section \ref{D}, we give an upper bound of the $c$-differential uniformity of $f$. Section \ref{E} concludes this paper.
\section{Preliminaries}\label{B}
In this section, we mainly introduce some basic results on quadratic character sums over $\mathbb{F}_{p^n}$. Let $\eta$ be the quadratic character over $\mathbb{F}_{p^n}$, i.e., for any $x\in \mathbb{F}_{p^{n}}$,
\begin{align*}
\eta(x)=x^{\frac{p^{n}-1}{2}}=
\left\{
    \begin{array}{ll}
    1, & \hbox{if $x$ is a square,}\\
    0, & \hbox{if $x=0$,}\\
    -1, & \hbox{if $x$ is a nonsquare.}
\end{array}
\right.
\end{align*}
It is well-known that $\sum\limits_{x\in \mathbb{F}_{p^n}}\eta(x)=0$, and $\eta(-1)=1$ (resp. $-1$) if $p^n\equiv1~(\bmod~4)$  (resp. $p^n\equiv3~(\bmod~4)$), which are used extensively in the calculations of character sums.

We consider now sums involving the quadratic character of the form
\begin{equation*}
\sum_{x\in \mathbb{F}_{{p^{n}}}}\eta(f(x))
\end{equation*}
with $f(x)\in \mathbb{F}_{p^{n}}[x]$. Recall that for $\deg(f(x))=1$,  the  sums above is trivial, and for $\deg(f(x))=2$, the following explicit formula was established in \cite{LN1997} and \cite{MMM2022}, respectively.
\begin{lemma}\cite{LN1997}\label{1}
Let $f(x)=a_{2}x^{2}+a_{1}x+a_{0}\in \mathbb{F}_{p^{n}}[x]$, $p$ is odd, and $a_{2}\neq 0$. Put $\Delta:=a_{1}^{2}-4a_{0}a_{2}$ be the discriminant of $f(x)=0$ and let $\eta$ be the quadratic character of $\mathbb{F}_{p^{n}}$. Then
\begin{align*}
\sum_{x\in \mathbb{F}_{{p^{n}}}}\eta(f(x))=
  \left\{
    \begin{array}{ll}
    -\eta(a_{2}), & \hbox{\rm{if} $\Delta\neq0$,}\\
    (p^{n}-1)\eta(a_{2}),  & \hbox{\rm{if} $\Delta=0$.}
    \end{array}
  \right.
\end{align*}
\end{lemma}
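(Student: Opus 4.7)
The plan is to reduce the sum to a canonical form by completing the square, then evaluate the resulting sum by a short substitution argument. Since $p$ is odd and $a_2 \neq 0$, I would write
\[
4a_2 f(x) = (2a_2 x + a_1)^2 - \Delta.
\]
Because $y = 2a_2 x + a_1$ is a bijection of $\mathbb{F}_{p^n}$, and $\eta$ is multiplicative with $\eta(4) = 1$, this yields
\[
\sum_{x \in \mathbb{F}_{p^n}} \eta(f(x)) = \eta(a_2) \sum_{y \in \mathbb{F}_{p^n}} \eta(y^2 - \Delta),
\]
so the task reduces to evaluating $S(\Delta) := \sum_{y \in \mathbb{F}_{p^n}} \eta(y^2 - \Delta)$.

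The case $\Delta = 0$ is immediate: $S(0) = \sum_{y \neq 0} \eta(y^2) = p^n - 1$, since $\eta(y^2) = 1$ whenever $y \neq 0$ and $\eta(0) = 0$. Plugged back, this gives the stated value $(p^n - 1)\eta(a_2)$.

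For $\Delta \neq 0$, I would use that the number of $y \in \mathbb{F}_{p^n}$ with $y^2 = u$ equals $1 + \eta(u)$ under the convention $\eta(0) = 0$. Grouping the sum over $y$ by $u = y^2 - \Delta$ and using the orthogonality relation $\sum_u \eta(u) = 0$ noted in the excerpt, one finds
\[
S(\Delta) = \sum_{u \in \mathbb{F}_{p^n}} \eta(u)\bigl(1 + \eta(u+\Delta)\bigr) = \sum_{u \in \mathbb{F}_{p^n}} \eta\bigl(u(u+\Delta)\bigr).
\]
For $u \neq 0$, factor $u(u+\Delta) = u^2(1 + \Delta/u)$, so that $\eta(u(u+\Delta)) = \eta(1 + \Delta/u)$, and substitute $v = 1 + \Delta/u$, which is a bijection of $\mathbb{F}_{p^n}^*$ onto $\mathbb{F}_{p^n} \setminus \{1\}$. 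The sum then collapses to $\sum_{v \neq 1} \eta(v) = -\eta(1) = -1$, so $S(\Delta) = -1$ and the full sum equals $-\eta(a_2)$.

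The main obstacle is really only the bookkeeping of the substitution in the $\Delta \neq 0$ case; everything else is a routine completion of the square and a split into the degenerate/nondegenerate discriminant subcases. Beyond the orthogonality relation $\sum_u \eta(u) = 0$ already noted, the proof uses nothing more than the multiplicativity of $\eta$ and two linear changes of variable, so no deeper tool from \cite{LN1997} is actually invoked.
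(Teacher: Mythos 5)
Your argument is correct and complete: the completion of the square, the reduction to $S(\Delta)=\sum_{y}\eta(y^{2}-\Delta)$, the root count $1+\eta(u)$, and the final substitution $v=1+\Delta/u$ all go through without a gap. The paper itself offers no proof of this lemma (it is quoted from \cite{LN1997}), and your derivation is essentially the standard one given there (Theorem 5.48 of Lidl--Niederreiter), so nothing further is needed.
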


\begin{lemma}\cite{MMM2022}\label{2}
Let $f(x)=a_2x^{2}+a_1x+a_0\in \mathbb{F}_{p^{n}}[x]$ with $p$ odd and $a_2\neq0$. Then the equation $f(x)=0$ has two (resp. one) solutions in $\mathbb{F}_{p^{n}}$ if and only if the discriminant $\Delta=a_{1}^{2}-4a_{0}a_{2}$ is a nonzero (resp. zero) square in $\mathbb{F}_{p^{n}}$. That is to say, the number of solutions of $f$ is $1+\eta(\Delta)$.
\end{lemma}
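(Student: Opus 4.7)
The plan is to reduce the count to the standard statement about square roots in $\mathbb{F}_{p^n}$. Since $p$ is odd, $2a_2$ is invertible, so I would complete the square in the usual way:
\[
a_2 x^2 + a_1 x + a_0 \;=\; a_2\Bigl(x + \tfrac{a_1}{2a_2}\Bigr)^{\!2} \;-\; \tfrac{\Delta}{4a_2}, \qquad \Delta := a_1^2 - 4 a_0 a_2.
\]
Multiplying through by $4 a_2$, the equation $f(x)=0$ is equivalent to $(2 a_2 x + a_1)^2 = \Delta$, and the affine change of variable $y = 2 a_2 x + a_1$ is a bijection of $\mathbb{F}_{p^n}$ (again because $2 a_2\neq 0$). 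Consequently, $\#\{x\in\mathbb{F}_{p^n}:f(x)=0\} = \#\{y\in\mathbb{F}_{p^n}:y^2=\Delta\}$, and the problem is reduced to counting square roots of $\Delta$.

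It then suffices to observe that $\#\{y\in\mathbb{F}_{p^n}:y^2=\Delta\} = 1 + \eta(\Delta)$ directly from the definition of the quadratic character: if $\Delta = 0$, then $y = 0$ is the unique solution and $\eta(\Delta)=0$; if $\Delta$ is a nonzero square, then $y=\pm\sqrt{\Delta}$ are two distinct solutions and $\eta(\Delta)=1$; and if $\Delta$ is a nonsquare, there is no solution and $\eta(\Delta)=-1$. Reading off the first two cases yields the dichotomy of two (resp.\ one) solutions when $\Delta$ is a nonzero (resp.\ zero) square, and all three cases collapse into the compact formula $1 + \eta(\Delta)$. There is no genuine obstacle: the lemma is essentially the quadratic formula paired with the definition of $\eta$, with the only non-automatic ingredient being the invertibility of $2a_2$, which is exactly why the hypothesis $p$ odd is required.
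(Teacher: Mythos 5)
Your proof is correct: completing the square, multiplying by $4a_2$ to get $(2a_2x+a_1)^2=\Delta$, and invoking the bijectivity of $y=2a_2x+a_1$ cleanly reduces the count to $\#\{y:y^2=\Delta\}=1+\eta(\Delta)$, with the oddness of $p$ used exactly where it is needed. The paper itself gives no proof of this lemma (it is quoted from the cited reference), and your argument is the standard one that any such source would use, so there is nothing to compare beyond noting that your write-up is complete and self-contained.
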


For $\deg(f(x))\geq 3$, it is challenging to obtain an explicit formula for the character sum $\sum\limits_{x\in \mathbb{F}_{p^{n}}}\eta(f(x))$. However, when $\deg(f(x))=3$, such a sum can be computed by considering $\mathbb{F}_{p^n}$-rational points of elliptic curves over $\mathbb{F}_{p}$. More specifically, we denote $\lambda_{p,n}$ as
$$\lambda_{p,n}=\sum_{x\in \mathbb{F}_{p^{n}}}\eta(f(x)).$$

To calculate $\lambda_{p,n}$, we will use some elementary concepts from the theory of elliptic curves \cite{S2009}. Let $E/\mathbb{F}_{p}$ be the elliptic curve over $\mathbb{F}_{p}$
$$E:y^{2}=f(x)$$
and $N_{p,n}$ denote the number of $\mathbb{F}_{p^{n}}$-rational points (remember the extra point at infinity) on the curve $E/\mathbb{F}_{p}$. From the results in \cite[P.139, P.142]{S2009}, we have, for all $n\geq1$,
$$N_{p,n}=p^n+1+\lambda_{p,n}$$
with
\begin{align*}
\lambda_{p,n}=-\alpha^{n}-\beta^{n},
\end{align*}
where $\alpha$ and $\beta$ are the two conjugate complex roots of the polynomial $T^{2}+\lambda_{p,1}T+p$.
We obtain an explicit and efficient formula of $\lambda_{p,n}$.

Define the following two specific character sums
\begin{align}\label{sum1}
\lambda^{(1)}_{p,n}=\sum_{x\in \mathbb{F}_{p^{n}}}\eta(x(x+1)(x-3))
\end{align}
and
\begin{align}\label{sum2}
\lambda^{(2)}_{p,n}=\sum_{x\in \mathbb{F}_{p^{n}}}\eta(x(x+1)(x-2)),
\end{align}
which play a significant role in our main results. As we will see later, the determination of the $(-1)$-differential spectrum of $f(x)=x^{\frac{p^{n}+3}{2}}$ over $\mathbb{F}_{p^{n}}$ heavily depend on the computations of $\lambda^{(1)}_{p,n}$ and $\lambda^{(2)}_{p,n}$.

In the following examples, we will give the exact values of $\lambda^{(1)}_{p,n}$ and $\lambda^{(2)}_{p,n}$, respectively, for specific values of $p$.

\begin{example}
Let $p=5$. We can obtain $\lambda^{(1)}_{5,1}=2$ by Magma program. So we have $\alpha,\beta=-1\pm2\sqrt{-1}$. Hence $\lambda^{(1)}_{5,n}=-(-1+2\sqrt{-1})^{n}-(-1-2\sqrt{-1})^{n}$.
\end{example}

\begin{example}
Let $p=13$. We can obtain $\lambda^{(2)}_{13,1}=2$ by Magma program. So we have $\alpha,\beta=-1\pm2\sqrt{-3}$. Hence $\lambda^{(2)}_{13,n}=-(-1+2\sqrt{-3})^{n}-(-1-2\sqrt{-3})^{n}$.
\end{example}

In addition, we have the following bound on $\lambda^{(i)}_{p,n}$ for $i=1,2$.
\begin{lemma}\cite{S2009}\label{bound}
With the notation as above, we have $|\lambda^{(i)}_{p,n}|\leq 2p^{\frac{n}{2}}$, for $i=1,2$.
\end{lemma}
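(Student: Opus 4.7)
The plan is to identify each character sum $\lambda^{(i)}_{p,n}$ with the "error term" in the point-count of an explicit elliptic curve over $\mathbb{F}_p$, and then to apply the Hasse--Weil bound (as stated in \cite{S2009}). Since the bound is really a citation of a classical theorem, my proof will mostly consist of checking that the hypotheses of that theorem are in force for the two specific cubics at hand.

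First I would consider the affine curves $E_1: y^2 = x(x+1)(x-3)$ and $E_2: y^2 = x(x+1)(x-2)$, both defined over $\mathbb{F}_p$. The key preliminary observation is that these are smooth (hence define elliptic curves) whenever the cubic on the right has three distinct roots in $\overline{\mathbb{F}_p}$. For $E_1$ the roots are $0,-1,3$, which are pairwise distinct provided $p\neq 2$ (so that $-1\neq 3$) and $p\neq 3$ (so that $0\neq 3$); for $E_2$ the roots are $0,-1,2$, distinct provided $p\neq 2$ and $p\neq 3$. Since the hypothesis of the paper is that $p$ is an odd prime and $p\neq 3$, both curves are elliptic in every case of interest.

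Next I would count $\mathbb{F}_{p^n}$-rational points on $E_i$ by the standard argument: for each $x\in\mathbb{F}_{p^n}$, the number of $y\in\mathbb{F}_{p^n}$ with $y^2=f_i(x)$ equals $1+\eta(f_i(x))$ (this being $2$, $1$, or $0$ according as $f_i(x)$ is a nonzero square, zero, or a nonsquare). Summing over $x$ and adding the point at infinity gives
\begin{equation*}
N_{p,n}(E_i) \;=\; 1+\sum_{x\in\mathbb{F}_{p^n}}\bigl(1+\eta(f_i(x))\bigr) \;=\; p^n+1+\lambda^{(i)}_{p,n}.
\end{equation*}
This is precisely the relation $N_{p,n}=p^n+1+\lambda_{p,n}$ recalled just before the lemma, specialised to $f=f_i$.

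Finally I would invoke the Hasse--Weil bound for elliptic curves over finite fields, which asserts that $|N_{p,n}(E_i)-(p^n+1)|\le 2p^{n/2}$; substituting the identity above gives $|\lambda^{(i)}_{p,n}|\le 2p^{n/2}$ for $i=1,2$, as required. There is no real obstacle here beyond the non-singularity check above; the substance of the lemma is the cited Hasse bound, and my role is simply to package each $\lambda^{(i)}_{p,n}$ as the trace of Frobenius on a genuine elliptic curve so that the bound applies.
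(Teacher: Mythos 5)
Your proposal is correct and follows essentially the same route as the paper, which likewise realises $\lambda^{(i)}_{p,n}$ as the trace-of-Frobenius error term $N_{p,n}-(p^n+1)$ for the elliptic curve $y^2=f_i(x)$ and then cites the Hasse bound from \cite{S2009}. The only addition on your side is the explicit check that the cubics have distinct roots when $p\neq 2,3$, which the paper leaves implicit; this is a worthwhile detail but not a different argument.
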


In the end of this section, we present the following results concerning the exact values specific character sums used in Section \ref{C}.
\begin{lemma}\label{4}
If $p^n\equiv3\pmod4$ and $p\neq 3$, then relevant consequences as follows,

$1)\sum\limits_{x\in
\mathbb{F}_{p^{n}}}\eta\left((2x-2)(2x+1)(2x+2)\right)=\lambda_{p, n}^{(1)}$,

$2)\sum\limits_{x\in
\mathbb{F}_{p^{n}}} \eta\left((2x-2)(2x-1)(2x+2)\right)=-\lambda_{p, n}^{(1)}$,

$3)\sum\limits_{x\in
\mathbb{F}_{p^{n}}}\eta((2x-2)(2x-1)(2x+1)(2x+2))=\lambda_{p, n}^{(1)}-1$,

$4)\sum\limits_{x\in
\mathbb{F}_{p^{n}}} \eta\left((2x-1)(2x+1)(2x+2)\right)=\lambda_{p, n}^{(2)}$,

$5)\sum\limits_{x\in
\mathbb{F}_{p^{n}}} \eta\left((2x-2)(2x-1)(2x+1)\right)=-\lambda_{p, n}^{(2)}$.
\end{lemma}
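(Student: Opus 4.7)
The plan is to reduce each of the five sums to either $\lambda^{(1)}_{p,n}$ or $\lambda^{(2)}_{p,n}$ by affine changes of variable, relying on three ingredients: (i) $y=2x$ is a bijection of $\mathbb{F}_{p^n}$ because $p$ is odd, so it turns every factor $(2x-a)$ into $(y-a)$; (ii) $y\mapsto -y$ is a bijection and $\eta(-1)=-1$ under the hypothesis $p^n\equiv 3\pmod 4$, which will supply the required sign changes; (iii) Lemma \ref{1} evaluates any quadratic character sum that shows up.

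For parts 1), 2), 4), 5), once $y=2x$ is applied the integrand becomes a product of three linear factors in $y$, and I would finish with a single translation $y=z+c$ that maps the root set onto $\{0,-1,3\}$ (matching the definition of $\lambda^{(1)}_{p,n}$) in parts 1), 2), or onto $\{0,-1,2\}$ (matching $\lambda^{(2)}_{p,n}$) in parts 4), 5). In parts 2) and 5) the required root set is not reachable by translation alone, so I would first apply $y\mapsto -y$, which multiplies the cubic sum by $\eta((-1)^3)=\eta(-1)=-1$; this is precisely the origin of the minus signs in the statement. For example in part 1), after $y=2x$ followed by $y=z-1$ the integrand $(y-2)(y+1)(y+2)$ becomes $(z-3)z(z+1)$, recognizable as $\lambda^{(1)}_{p,n}$.

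Part 3) is the main obstacle, since the quartic $(2x-2)(2x-1)(2x+1)(2x+2)$ cannot be reduced to a cubic by any affine substitution. The plan is, after $y=2x$, to write the sum as $\sum_y\eta((y^2-1)(y^2-4))$ and apply the standard square-root trick: isolate the $y=0$ term (which contributes $\eta(4)=1$), and for $y\neq 0$ parameterize by $u=y^2$, using that each nonzero square is hit by exactly two values of $y$. This yields
\begin{equation*}
\sum_{y\neq 0}\eta\bigl((y^2-1)(y^2-4)\bigr)=\sum_{u\in\mathbb{F}_{p^n}^*}\bigl(1+\eta(u)\bigr)\,\eta\bigl((u-1)(u-4)\bigr).
\end{equation*}
The $\eta((u-1)(u-4))$ piece is evaluated by Lemma \ref{1} (discriminant $9\neq 0$, giving total $-1$ over all of $\mathbb{F}_{p^n}$); the $\eta(u(u-1)(u-4))$ piece, after the translation $u=v+1$, becomes $\sum_v\eta(v(v+1)(v-3))=\lambda^{(1)}_{p,n}$. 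After correcting for the $u=0$ contribution $\eta(4)=1$ omitted from the first piece and adding back the isolated $y=0$ term, everything collapses to $\lambda^{(1)}_{p,n}-1$, as claimed. The harder bookkeeping is entirely concentrated in this square-root reduction; the remaining four identities fall out once the right affine substitutions are selected.
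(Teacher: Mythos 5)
Your proposal is correct, and for parts 1), 2), 4), 5) it coincides with the paper's proof: the paper substitutes $t=2x+1$ (parts 1, 4) or $t=2x-1$ followed by $t\mapsto -t$ (parts 2, 5), which is exactly your composite of $y=2x$ with a translation and, where needed, a negation contributing $\eta((-1)^3)=-1$. The only genuine divergence is in part 3). The paper first rewrites $\eta\bigl((4x^2-4)(4x^2-1)\bigr)$ as $\eta\bigl(\tfrac{4x^2-4}{4x^2-1}\bigr)$ on the set where $4x^2-1\neq 0$, sets $u$ equal to that quotient, and counts the fiber over each $u\neq 1$ as $1+\eta(\Delta)$ with $\Delta=16(u-1)(u-4)$ via Lemma \ref{2}, which after subtracting the spurious $u=1$ term gives $\sum_u\eta\bigl(u(u-1)(u-4)\bigr)-1$. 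You instead substitute $u=y^2$ directly into $\sum_y\eta\bigl((y^2-1)(y^2-4)\bigr)$, weight by $1+\eta(u)$, and evaluate the residual quadratic sum $\sum_u\eta\bigl((u-1)(u-4)\bigr)=-1$ by Lemma \ref{1}. Both are two-to-one reductions landing on the same cubic sum $\sum_u\eta\bigl(u(u-1)(u-4)\bigr)=\lambda^{(1)}_{p,n}$; your version trades the paper's quotient trick and its $u=1$ boundary case for the $y=0$ and $u=0$ corrections, and your bookkeeping ($1-1-1+\lambda^{(1)}_{p,n}=\lambda^{(1)}_{p,n}-1$) checks out. Neither route has an advantage beyond taste, though yours makes the use of the hypothesis $p\neq 3$ slightly more visible (the discriminant $9$ must be nonzero).
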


\begin{proof}
1) It is clear that $2x+1$ is a permutation of $\mathbb{F}_{p^{n}}$. Set $t=2x+1$. The left side of equation 1) leads to
$$
\sum\limits_{t\in\mathbb{F}_{p^{n}}} \eta(t(t+1)(t-3))=\lambda_{p, n}^{(1)}.
$$

2) Set $t=2x-1$. The left side of equation 2) leads to
\begin{align*}
&\sum\limits_{t\in\mathbb{F}_{p^{n}}}\eta(t(t-1)(t+3))\\
=&\sum\limits_{t\in\mathbb{F}_{p^{n}}}\eta((-t)(-t-1)(-t+3))\\
=&-\sum\limits_{t\in\mathbb{F}_{p^{n}}}\eta(t(t+1)(t-3))\\
=&-\lambda_{p,n}^{(1)}
\end{align*}
since $\eta(-1)=-1$.

3) First observe that the left side of equation 3) can be written as
\begin{align}\label{ux}
\sum\limits_{4x^{2}-1 \neq 0} \eta\Big(\frac{4x^{2}-4}{4x^{2}-1}\Big).
\end{align}

If we set $\frac{4x^{2}-4}{4x^{2}-1}=u$, then $u$ and $x$ satisfy
$$
(4-4u)x^{2}+u-4=0.
$$

When $u\neq 1$, it is a quadratic equation with connection on the variable $x$, with discriminant $\Delta=16(u-4)(u-1)$. Therefore, according to Lemma \ref{2} for all $\ u\neq1$, it corresponds to $(1+\eta(\Delta))$ $x$'s, and then
\begin{align*}
\mathrm{Eq.}\eqref{ux}
=&\sum\limits_{u \in \mathbb{F}_{p^{n}}} \eta(u)(1+\eta(\Delta))-\eta(1)\\
=&\sum\limits_{u \in \mathbb{F}_{p^{n}}} \eta(u(u-4)(u-1))-1\\ =&\sum\limits_{t \in \mathbb{F}_{p^{n}}} \eta(t(t+1)(t-3))-1\\
=&\lambda_{p, n}^{(1)}-1.
\end{align*}

5) Set $t=2x-1$. The left side of equation 5) leads to
\begin{align*}
&\sum\limits_{t \in \mathbb{F}_{p^{n}}} \eta(t(t-1)(t+2))\\
=&\sum\limits_{t \in \mathbb{F}_{p^{n}}} \eta((-t)(-t-1)(-t+2))\\
=&-\sum\limits_{t \in \mathbb{F}_{p^{n}}} \eta(t(t+1)(t-2))\\
=&-\lambda_{p, n}^{(2)}.
\end{align*}
\end{proof}
The proof of following lemma is very similar to that of \cite[Lemma 5]{YMT2023}, we will no longer prove that.
\begin{lemma}\label{sums}
Let $\lambda^{(1)}_{p,n}$ and $\lambda^{(2)}_{p,n}$ be defined as above. If $p^n\equiv1\pmod4$  and $p\neq 3$, then
 we have

$(1)\sum\limits_{x\in \mathbb{F}_{p^{n}}}\eta(x(x^2+x+1))=\lambda^{(1)}_{p,n}$,

$(2)\sum\limits_{x\in \mathbb{F}_{p^{n}}}\eta((x+1)(x^2+x+1))=\lambda^{(1)}_{p,n}$,

$(3)\sum\limits_{x\in \mathbb{F}_{{p^{n}}}}\eta((x^2+x)(x^2+x+1))=\lambda^{(1)}_{p,n}-1$,

$(4)\sum\limits_{x\in \mathbb{F}_{p^{n}}}\eta(x(3x^2+2x+3))=\lambda^{(2)}_{p,n}$,

$(5)\sum\limits_{x\in \mathbb{F}_{{p^{n}}}}\eta((x^2+x+1)(3x^2+2x+3))=\lambda^{(2)}_{p,n}-\eta(3)$,

$(6)\sum\limits_{x\in \mathbb{F}_{p^{n}}}\eta(x(x^2+x+1)(3x^2+2x+3))=2\lambda^{(1)}_{p,n}$.\\
\end{lemma}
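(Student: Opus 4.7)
The plan is to prove each identity by reducing it to one of the defining cubic character sums $\lambda^{(1)}_{p,n}$ or $\lambda^{(2)}_{p,n}$, using two key tools: first, the Joukowski-type substitution $y = x + x^{-1}$ on $\mathbb{F}_{p^{n}}^{*}$, under which $x^{2}+x+1 = x(y+1)$, $3x^{2}+2x+3 = x(3y+2)$, and $(x\pm 1)^{2} = (y\pm 2)x$; and second, Lemma \ref{2} applied to an auxiliary quotient variable $u$. The standing hypothesis $p^{n}\equiv 1\pmod 4$ is used throughout via $\eta(-1)=1$ to absorb sign factors.

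For the cubic identities (1), (2), (4): for (1), since $\eta(x(x^{2}+x+1)) = \eta(x^{2}(y+1)) = \eta(y+1)$ for $x\neq 0$, grouping by $y$ and using that the fibre of $x\mapsto y$ has size $1+\eta(y^{2}-4)$ (with boundary $y=\pm 2$) together with $\sum_{y}\eta(y+1)=0$ reduces the sum to $\sum_{y}\eta((y+1)(y-2)(y+2))$; the substitution $y=w-1$ then yields $\sum_{w}\eta(w(w+1)(w-3))=\lambda^{(1)}_{p,n}$. Identity (2) follows from (1) via $x\to -1-x$, noting $(-1-x)^{2}+(-1-x)+1 = x^{2}+x+1$ together with $\eta(-1)=1$. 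Identity (4) is the direct analog using $\eta(3y+2)$; after the same reduction, substituting $z=3y+2$ and then scaling $z=4w$ gives $\lambda^{(2)}_{p,n}$.

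For the quartic identities (3) and (5), I would mirror part 3) of Lemma \ref{4}. For (3), set $u=x^{2}+x$; by Lemma \ref{2} the fibre has size $1+\eta(1+4u)$, so the sum splits as $\sum_{u}\eta(u(u+1))+\sum_{u}\eta(u(u+1)(4u+1))$. The first equals $-1$ by Lemma \ref{1} applied to $v^{2}-1$ after $v=2u+1$; the second, via $v=4u+1$ and the reflection $v\to -w$ using $\eta(-1)=1$, gives $\lambda^{(1)}_{p,n}$. For (5), set $u=(x^{2}+x+1)/(3x^{2}+2x+3)$; the associated quadratic has discriminant $-(8u-3)(4u-1)$, and Lemma \ref{2} together with the substitution $v=8u-3$ ultimately yields $\lambda^{(2)}_{p,n}-\eta(3)$, where the correction term arises from the degenerate case $u=1/3$ (uniquely corresponding to $x=0$).

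For the quintic identity (6), use the factorization $(x^{2}+x+1)(3x^{2}+2x+3)=x^{2}(y+1)(3y+2)$, so the summand equals $\eta(x)\eta((y+1)(3y+2))$ for $x\neq 0$. Replacing $\eta(x)$ by $\eta(y+2)$ for $x\neq \pm 1$ (via $(x+1)^{2}=(y+2)x$), accounting for the boundary contributions at $x=\pm 1$ (i.e., $y=\pm 2$), and summing over $y$ with the fibre weight $1+\eta(y^{2}-4)$, one arrives after the substitution $z=3y+2$ at $S_{6}=\sum_{z}\eta(z(z+1)(z+4))+\sum_{z}\eta(z(z+1)(z-8))$. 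The first summand equals $\lambda^{(1)}_{p,n}$ by a shift-and-reflect as in (1). The main obstacle is showing the second summand also equals $\lambda^{(1)}_{p,n}$: the cubics $z(z+1)(z-8)$ and $w(w+1)(w-3)$ are not affinely equivalent, so the identity encodes a deeper coincidence between the associated elliptic curves. I would handle this by a secondary Joukowski substitution $w=z-8/z$, reducing to the cubic sum $\sum_{w}\eta((w-7)(w^{2}+32))$ and then following the corresponding argument from \cite[Lemma 5]{YMT2023}, in which $\eta(-1)=1$ is essential to equate the Frobenius traces of the two curves.
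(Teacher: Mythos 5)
The paper itself does not prove this lemma (it only cites \cite[Lemma 5]{YMT2023}), so your proposal has to stand on its own. Items (1)--(5) do: I checked the substitutions and fibre counts, and each step is correct. In (1) the fibre weight $1+\eta(y^2-4)$ and the shift $y=w-1$ give exactly $\lambda^{(1)}_{p,n}$; (2) follows from (1) via $x\mapsto -1-x$ and $\eta(-1)=1$; in (3) the split $-1+\sum_u\eta(u(u+1)(4u+1))$ with $v=4u+1$ and a reflection works; (4) and (5) also check out, including the bookkeeping of the three $\eta(3)$-terms around $u=1/3$ in (5). Your reduction of (6) to $\sum_z\eta(z(z+1)(z+4))+\sum_z\eta(z(z+1)(z-8))$ is likewise correct (the boundary terms at $x=-1$ and $y=-2$ cancel), and the first summand equals $\lambda^{(1)}_{p,n}$ by $z=-w-1$.

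The genuine gap is the last step of (6). Your substitution $w=z-8/z$ does transform $\sum_z\eta(z(z+1)(z-8))$ into $\sum_w\eta((w-7)(w^2+32))$, but this gets you nowhere: that cubic is the character sum of the curve $2$-isogenous to $y^2=z(z+1)(z-8)$ with kernel at $(0,0)$, and it is still not affinely equivalent to $x(x+1)(x-3)$ (the $j$-invariants $1556068/81$ and $35152/9$ differ), so "following the corresponding argument from \cite{YMT2023}" is an appeal to an unstated isogeny fact rather than a proof. The fix is to centre the quadratic substitution at the $2$-torsion point $-1$ instead of $0$. Writing $z=t-1$ gives $\sum_t\eta\big(t(t-1)(t-9)\big)=\sum_t\eta\big(t(t^2-10t+9)\big)$; for $t\neq 0$ the summand is $\eta(u-10)$ with $u=t+9/t$, whose fibre has size $1+\eta(u^2-36)$, so the sum equals $\sum_u\eta\big((u-10)(u-6)(u+6)\big)$. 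Now the affine change $u=-4x+6$ sends the root set $\{0,-1,3\}$ to $\{6,10,-6\}$ and the product to $-64\,x(x+1)(x-3)$, and since $\eta(-64)=\eta(-1)=1$ this sum is exactly $\lambda^{(1)}_{p,n}$. With that replacement your argument for (6) closes, and the whole lemma is proved along the elementary lines you set out.
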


\section{The $c$-differential properties of $f(x)=x^{\frac{p^n+3}{2}}$ over $\mathbb{F}_{p^n}$}\label{C}
In this section, we are about to determine the $(-1)$-differential spectrum of $f$ explicitly.

We first introduce some properties of the $c$-differential spectrum of power function presented by
Yan and Zhang in \cite{YZ2022},  which will be used to examine the $(-1)$-differential spectrum of $f$.
\begin{lemma}\cite{YZ2022}
Let $f(x)=x^{d}$ be a power function over $\mathbb{F}_{p^n}$ with $c$-differential uniformity $_{c}\Delta_{f}$ for some $1\neq c\in\mathbb{F}_{p^n}$, where $d$ is a positive integer. Then we have
\begin{align}\label{c-omegaiomega}
\sum_{i=0}^{_{c}\Delta_{f}}{_{c}\omega_{i}}
=\sum_{i=0}^{_{c}{\Delta_{f}}} (i \cdot {_{c}{\omega_{i}}})=p^{n}.
\end{align}
Moreover,
\begin{equation}\label{c-i^2omega}
\sum\limits_{i = 0}^{{}_c\Delta_f}(i^2\cdot {}_c\omega _i)=\frac{_cN_4-1}{p^n-1}-\gcd(d,p^{n}-1),
\end{equation}
where
\begin{equation}\label{n4}
{}_cN_{4}=\#\left\{ {(x_1,x_2,x_3,x_4) \in (\mathbb{F}_{p^n})^4 : \Bigg\{ \begin{array}{ll}
x_1 - x_2 + x_3 - x_4 &= 0\\
x_1^d - cx_2^d + cx_3^d - x_4^d &= 0
\end{array} }  \right\}.
\end{equation}
\end{lemma}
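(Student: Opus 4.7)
The plan is to extract both identities from direct counting arguments on the defining sets of the quantities ${}_c\omega_i$ and ${}_cN_4$. For the first identity, I would observe that the sets $\{b\in\mathbb{F}_{p^n}:{}_c\delta_f(b)=i\}$ for $0\le i\le{}_c\Delta_f$ partition $\mathbb{F}_{p^n}$, which immediately gives $\sum_i{}_c\omega_i=p^n$. For $\sum_i i\cdot{}_c\omega_i$, I would rewrite it as a double count
\[
\sum_i i\cdot{}_c\omega_i=\sum_{b\in\mathbb{F}_{p^n}}{}_c\delta_f(b)
=\#\{(x,b)\in\mathbb{F}_{p^n}\times\mathbb{F}_{p^n}:(x+1)^d-cx^d=b\},
\]
and note that every $x$ contributes exactly one pair, so the total equals $p^n$. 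This part is routine.

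For the second identity, the main idea is to relate $\sum_i i^2\cdot{}_c\omega_i$ to ${}_cN_4$ by reinterpreting the square of $_c\delta_f(b)$. I would write
\[
\sum_i i^2\cdot{}_c\omega_i=\sum_{b\in\mathbb{F}_{p^n}}{}_c\delta_f(b)^2
=\#\{(x,y)\in\mathbb{F}_{p^n}^2:(x+1)^d-cx^d=(y+1)^d-cy^d\},
\]
then package each pair $(x,y)$ as a quadruple $(x_1,x_2,x_3,x_4)=(x+1,x,y,y+1)$. These are precisely the quadruples counted by ${}_cN_4$ subject to the extra normalization $x_1-x_2=1$.

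The key step is to remove that normalization. Let $t=x_1-x_2=x_4-x_3$. For $t\neq 0$, substituting $x_1=t(u+1)$, $x_2=tu$, $x_4=t(v+1)$, $x_3=tv$ reduces the defining equations to $(u+1)^d-cu^d=(v+1)^d-cv^d$, so the number of quadruples with $x_1-x_2=t$ equals $\sum_i i^2\cdot{}_c\omega_i$, independently of $t$. For $t=0$ the system collapses to $x_1=x_2$, $x_3=x_4$, and $(1-c)(x_1^d-x_3^d)=0$; since $c\ne 1$, one gets $x_1^d=x_3^d$, which has exactly $1+(p^n-1)\gcd(d,p^n-1)$ solutions (the trivial $x_1=x_3=0$ plus $\gcd(d,p^n-1)$ nonzero $x_3$ per nonzero $x_1$). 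Summing over all $t$ yields
\[
{}_cN_4=(p^n-1)\sum_i i^2\cdot{}_c\omega_i+1+(p^n-1)\gcd(d,p^n-1),
\]
and isolating the sum gives the claimed formula.

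I do not anticipate a serious obstacle; the only delicate point is the $t=0$ case, where the hypothesis $c\neq 1$ is essential to reduce $(1-c)(x_1^d-x_3^d)=0$ to $x_1^d=x_3^d$, and a careful count of its solutions produces the $\gcd(d,p^n-1)$ correction term. Everything else is a clean substitution argument exploiting the homogeneity of the power map $x\mapsto x^d$.
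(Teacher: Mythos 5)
Your proposal is correct and complete. The paper itself gives no proof of this lemma --- it is quoted from \cite{YZ2022} --- but your argument (partitioning $\mathbb{F}_{p^n}$ by the value of ${}_c\delta_f(b)$, double-counting pairs for the first and second moments, and then computing ${}_cN_4$ by splitting on $t=x_1-x_2=x_4-x_3$, using homogeneity of $x\mapsto x^d$ for $t\neq 0$ and the hypothesis $c\neq 1$ together with the $\gcd(d,p^n-1)$ count for $t=0$) is exactly the standard derivation given in the cited source, and all the counts check out.
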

\subsection{$p^n \equiv 3 \pmod 4$ and $p\neq3$ }
In this subsection, we give the $(-1)$-differential spectrum of the power function $f(x)=x^{\frac{p^n+3}{2}}$ over $\mathbb{F}_{p^n}$, $p^n \equiv 3 \pmod 4$ and $p\neq3$.
Before that, we give some notions.

For any square $\alpha\in\mathbb{F}_{p^n}^{*}$,~$\sqrt\alpha$ denotes either solution of the equation $x^{2}=\alpha$ in $\mathbb{F}_{p^n}$. Let $C_{0}$ and $C_{1}$ denote the sets of squares and nonsquares in $\mathbb{F}_{p^n}^{*}$, respectively. The cyclotomic number $(i,j)$ is defined as the cardinality of
the set $C_{i,j}=\{{x\in\mathbb{F}_{p^n}\backslash\{{0,-1}\}:x\in C_{i},x+1\in C_{j}}\}$, where $i,j\in\{{0,1}\}$.

\begin{theorem}\label{Thm3}
Let $f(x)=x^{\frac{p^n+3}{2}}$ over $\mathbb{F}_{p^n}$, where $p^n \equiv 3 \pmod 4$ and $p\neq3$. Then the $(-1)$-differential spectrum of $f$ is
\begin{align*}
\mathbb S=\{&_{-1}\omega_{0}=\frac{1}{4}(p^{n}+\lambda_{p, n}^{(1)}+1),~{}_{-1}\omega_{1}=\frac{1}{16}(9 p^{n}-5\lambda_{p, n}^{(1)}+2\lambda_{p, n}^{(2)}-7),\\
&{}_{-1}\omega_{2}=\frac{1}{8}(p^{n}-\lambda_{p,n}^{(1)}-2\lambda_{p, n}^{(2)}+1),
~{}_{-1}\omega_{3}=\frac{1}{16}(p^{n}+3\lambda_{p, n}^{(1)}+2\lambda_{p, n}^{(2)}+1)\}.
\end{align*}
\begin{proof}

In order to determine the $(-1)$-differential spectrum of $f$, the number of solutions $N_{f}(b)$ of the equation
\begin{equation}\label{7}
(x+1)^{d}+x^{d}=b,~\mathrm{for~all}~b\in\mathbb{F}_{p^{n}}
\end{equation}
needs to be determined.

Evidently, if $x=0$, then $b=1$, and if $x=-1$, then $b=-1$ since $d=\frac{p^{n}+3}{2}$ is odd. Now we always assume $x\in \mathbb{F}_{p^n}^{*}\backslash\{-1\}$, Eq.\eqref{7} can be written as
\begin{equation}\label{8}
(\eta(x+1)+\eta(x)) x^{2}+2 \eta(x+1) x+\eta(x+1)-b=0.
\end{equation}
Note that $\mathbb{F}_{p^{n}}^{*}\backslash\{-1\}=\mathcal{C}_{0,0} \cup \mathcal{C}_{0,1} \cup \mathcal{C}_{1,0} \cup \mathcal{C}_{1,1}$. The following four cases are discussed.

(1) $x \in \mathcal{C}_{0,0}$, that is $\eta(x)=\eta(x+1)=1$. Eq.\eqref{8} becomes
\begin{equation}\label{9}
2x^{2}+2x+1-b=0
\end{equation}
with discriminant $\Delta_{1}=8b-4$.

If $\Delta_{1}=0$, then $b=\frac{1}{2}$, by Eq.\eqref{9}, we get $x=-\frac{1}{2}$. So $\eta(x+1)=\eta(-\frac{1}{2}+1)
=\eta(\frac{1}{2})=\eta(-x)=-\eta(x)$, i.e., $\eta(x)\neq\eta(x+1)$, which means that $x=-\frac{1}{2} \notin \mathcal{C}_{0,0}$.

If $\eta\left(\Delta_{1}\right)=1$, we can acquire two solutions of Eq.\eqref{9} over $\mathbb{F}_{p^n}^{*} \backslash\{-1\}$, denote by $x_{1}=\frac{-1 + \sqrt{2 b-1}}{2}$, $x_{1}^{\prime}=\frac{-1 - \sqrt{2 b-1}}{2}$. Apparently, $x_{1}+1=-x_{1}^{\prime}$, $x_{1}^{\prime}+1=-x_{1}$ and  $x_{1}\left(x_{1}+1\right)=\frac{b-1}{2}$. Since $\eta(-1)=-1$, we assert that Eq.\eqref{8} has at most one solution in $\mathcal{C}_{0,0}$ when $b\in D_{1}$, where
$$D_{1}=\{b \in\mathbb{F}_{p^{n}}: \eta(2b-1)=\eta(2b-2)=1\}.$$

(2) $x \in \mathcal{C}_{0,1}$, that is $\eta(x)=1,~\eta(x+1)=-1$. Eq.\eqref{8} becomes $2x+1+b=0$, one can be deduced that $x=\frac{-1-b}{2}$, $x+1=\frac{1-b}{2}$. Hence Eq.\eqref{8} has at most one solution when $b\in D_{2}$, where
$$D_{2} =\{b \in \mathbb{F}_{p^{n}}: \eta(2b-2)=1,~\eta(2b+2) =-1\}.$$

(3) $x \in \mathcal{C}_{1,0}$, that is $\eta(x)=-1, \eta(x+1)=1$. Eq.\eqref{8} becomes $2x+1-b=0$, then $x=\frac{b-1}{2}$, $x+1=\frac{b+1}{2}$. Therefore, Eq.\eqref{8} has at most one solution in $\mathcal{C}_{1,0}$ if $b\in D_{3}$, where
$$D_{3}=\{b \in \mathbb{F}_{p^n}:\eta(2b-2)=-1,~\eta(2b+2) =1\}.$$

(4) $x \in \mathcal{C}_{1,1}$, that is $\eta(x)=\eta(x+1)=-1$. Eq.\eqref{8} becomes
\begin{equation}\label{10}
2x^{2}+2x+1+b=0,
\end{equation}
with discriminant $\Delta_{2}=-4-8b$.

If $\Delta_{2}=0$, then $b=-\frac{1}{2}$ and $x=-\frac{1}{2}$. So $\eta(x+1)=\eta(\frac{1}{2})=\eta(-x)=-\eta(x)$, i.e., $\eta(x)\neq\eta(x+1)$, which means that $x=-\frac{1}{2} \notin \mathcal{C}_{1,1}$.

If $\eta\left(\Delta_{2}\right)=1$, we can acquire two solutions of Eq.\eqref{10} over $\mathbb{F}_{p^n}^{*} \backslash\{-1\}$, denote by $x_{2}=\frac{-1 + \sqrt{-2 b-1}}{2}$, $x_{2}^{\prime}=\frac{-1 - \sqrt{-2 b-1}}{2}$. Obviously, $x_{2}+1=-x_{2}^{\prime}$, $x_{2}^{\prime}+1=-x_{2}$ and $x_{2}\left(x_{2}+1\right)=-\frac{1+b}{2}$. Since $\eta(-1)=-1$, we assert that Eq.\eqref{8} has at most one solution in $\mathcal{C}_{1,1}$ when $b\in D_{4}$, where
$$D_{4} =\{b \in \mathbb{F}_{p^n} : \eta(2b+1)=\eta(2b+2)=-1\}.$$

We can get that $b=\pm 1 \notin \bigcup\limits_{i=1}^{4} D_{i}$, $D_{3} \cap\left(D_{1} \cup D_{2} \cup D_{4}\right)=\emptyset$ and $D_{1} \cap D_{4}=D_{1} \cap D_{2} \cap D_{4}$, The discussion above can be summed up in Table \ref{biao2}.
\begin{table}[h]
  \small
  \centering
  \caption{Solutions to Eq.\eqref{7} in $\mathbb{F}_{p^{n}}$
  \label{biao2}}
\begin{tabular}{|c|c|}
  \hline
  $b$ & solutions \\
  \hline
  $b=1$ & $x=0$ \\
  \hline
  $b=-1$ & $x=-1$ \\
  \hline
  $b\in D_{1}$ & $x\in A\subset\{\frac{1+\sqrt{2b-1}}{2}, \frac{1-\sqrt{2b-1}}{2}\}$ \\
  \hline
  $b\in D_{2}$ & $x=-\frac{b+1}{2}$ \\
  \hline
  $b\in D_{3}$ & $x=\frac{b-1}{2}$ \\
  \hline
  $b\in D_{4}$ & $x\in B\subset\{\frac{-1+\sqrt{-2b-1}}{2}, \frac{-1-\sqrt{-2b-1}}{2}\}$ \\
  \hline
\end{tabular}
\end{table}

Hence Eq.\eqref{7} has at most three solutions in $\mathbb{F}_{p^{n}}$, and the number of the solutions of Eq.\eqref{7} is
\begin{align*}
N_{f}(b)=\left\{\begin{array}{ll}
3, & \mathrm{if}~b \in D_{1} \cap D_{2} \cap D_{4}, \\
2, & \mathrm{if}~b \in(\left(D_{1} \cap D_{2}\right) \backslash D_{4})\cup (\left(D_{2} \cap D_{4}\right) \backslash D_{1}), \\
1, &  \mathrm{if}~b=\pm 1,~\mathrm{or}~b \in D_{3},\mathrm{or}~b\in (D_{2} \backslash\left(D_{1} \cup D_{4}\right))\\
   & ~\cup (D_{1}\backslash\left(D_{2} \cup D_{4}\right))\cup(D_{4} \backslash\left(D_{1} \cup D_{2}\right)),\\
0, & \mathrm{otherwise}.
\end{array}\right.
\end{align*}

Now we calculate the cardinality of the sets $D_{1}\cap D_{2}\cap D_{4}$, $(D_{1} \cap D_{2}) \backslash D_{4}$, $(D_{2} \cap D_{4})\backslash D_{1}$ to determine $(-1)$-differential spectrum of $f$.

First, note that
$$D_{1}\cap D_{2}\cap D_{4}=\{b \in \mathbb{F}_{p^n}: \eta(2b-2)=\eta(2b-1)=1,~\eta(2b+1)=\eta(2b+2)=-1\}.$$
By the definition of the $(-1)$-differential spectrum of $f$, we have
\begin{align*}
_{-1}\omega_{3}=&\#(D_{1}\cap D_{2}\cap D_{4})\\
=&\frac{1}{16}\sum_{b\in\mathbb{F}_{p^{n}}}((1+\eta(2b-2))(1+\eta(2b-1))
(1-\eta(2b+1))(1-\eta(2 b+2)))\\
&-\frac{1}{16}\sum_{b\in\{\pm 1,\pm\frac{1}{2}\}}
((1+\eta(2b-2))(1+\eta(2b-1))(1-\eta(2b+1))(1-\eta(2 b+2)))\\
=&M_{1}-M_{2}.
\end{align*}
where $M_{1}$ and $M_{2}$ denote by the two summations above respectively.

It can be easy to see that $M_{2}=0$. Expanding the expression of $M_1$,  then we have that
$$M_1=\frac{1}{16}(p^{n}+3\lambda_{p, n}^{(1)}+2\lambda_{p, n}^{(2)}+1)$$
after a calculation based on the fact in Lemmas \ref{1}, \ref{2} and \ref{4}.\\
Therefore, $_{-1}\omega_{3}=M_{1}-M_{2}
=\frac{1}{16}(p^{n}+3\lambda_{p,n}^{(1)}+2\lambda_{p,n}^{(2)}+1)$.

Similarly, we have
$$(D_{1}\cap D_{2})\backslash D_{4}=\{b\in\mathbb{F}_{p^{n}}:\eta(2b-2)=\eta(2b\pm1)=1,~\eta(2b+2)=-1\},$$
$$(D_{2}\cap D_{4})\backslash D_{1}=\{b \in \mathbb{F}_{p^{n}}: \eta(2b+2)=\eta(2b\pm1)=-1,~\eta(2b-2)=1\},$$
and
\begin{align*}
_{-1}\omega_{2}=&\#((D_{1} \cap D_{2}) \backslash D_{4})+
\#((D_{2}\cap D_{4})\backslash D_{1})\\
=&\frac{1}{16}(p^{n}-\lambda_{p, n}^{(1)}-2\lambda_{p, n}^{(2)}+1)+\frac{1}{16}(p^{n}-\lambda_{p, n}^{(1)}-2\lambda_{p, n}^{(2)}+1)\\
=&\frac{1}{8}(p^{n}-\lambda_{p, n}^{(1)}-2\lambda_{p, n}^{(2)}+1).
\end{align*}
Plugging $_{-1}\omega_{2}$ and $_{-1}\omega_{3}$ into Eq.\eqref{c-omegaiomega}, we get
\begin{align*}
\left\{\begin{array}{ll}
_{-1}\omega_{1}=\frac{1}{16}(9p^{n}-5\lambda_{p, n}^{(1)}+2 \lambda_{p, n}^{(2)}-7),\\
_{-1}\omega_{0}=\frac{1}{4}(p^{n}+\lambda_{p, n}^{(1)}+1).
\end{array}\right.
\end{align*}

In conclusion, which completes the proof.
\end{proof}
\end{theorem}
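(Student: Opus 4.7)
The plan is to determine the $(-1)$-differential spectrum by computing $N_f(b) := {}_{-1}\delta_f(b)$, the number of solutions to $(x+1)^{d} + x^{d} = b$ for each $b \in \mathbb{F}_{p^n}$, and then using Eq.\eqref{c-omegaiomega} to finish. The key arithmetic fact I would exploit is that $d = \frac{p^n+3}{2} = \frac{p^n-1}{2} + 2$, so $y^{d} = \eta(y)\,y^2$ for every $y \in \mathbb{F}_{p^n}^{*}$. Since $p^n \equiv 3 \pmod 4$ forces $d$ to be odd, the boundary values $x=0$ and $x=-1$ contribute $b=1$ and $b=-1$ respectively. For $x \in \mathbb{F}_{p^n}^{*}\setminus\{-1\}$ the equation reduces to the quadratic-looking expression $(\eta(x+1)+\eta(x))\,x^2 + 2\eta(x+1)\,x + \eta(x+1) - b = 0$, whose structure depends on the pair $(\eta(x),\eta(x+1))$.

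Next I would partition $\mathbb{F}_{p^n}^{*}\setminus\{-1\}$ into the cyclotomic classes $\mathcal{C}_{0,0}, \mathcal{C}_{0,1}, \mathcal{C}_{1,0}, \mathcal{C}_{1,1}$ corresponding to the four sign patterns of $(\eta(x),\eta(x+1))$, and solve the resulting equation in each class separately. In the two ``equal sign'' classes the equation is genuinely quadratic and I would use Lemma \ref{2} to characterize solvability via a discriminant; in the two ``opposite sign'' classes it collapses to a linear equation with a single explicit root. In every case I would then check whether the root actually lies in the correct class by evaluating $\eta(x)$ and $\eta(x+1)$ on the explicit formula (using identities such as $x(x+1) = \tfrac{b-1}{2}$ or $-\tfrac{b+1}{2}$). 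This should produce four admissible sets $D_1, D_2, D_3, D_4$ of $b$-values defined by conditions on $\eta(2b\pm 1)$ and $\eta(2b\pm 2)$, together with the bound $N_f(b)\le 3$ and a case list for which $b$ give which value of $N_f(b)$.

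The heart of the proof is then the cardinality computations. I would express $_{-1}\omega_{3} = \#(D_1\cap D_2\cap D_4)$, and $_{-1}\omega_{2}$ as a sum of two such intersections (after verifying that $D_3$ is disjoint from the others and the relevant triple overlaps collapse). Each cardinality is
\[
\frac{1}{16}\sum_{b\in\mathbb{F}_{p^n}}\prod_{k}\bigl(1\pm\eta(2b\pm k)\bigr)
\]
minus a correction at the finitely many $b \in \{\pm 1, \pm\tfrac{1}{2}\}$ where one of the characters vanishes. Expanding the product gives a linear combination of character sums $\sum_{b}\eta(g(b))$ with $g$ a product of at most four linear factors in $b$; these are exactly the sums tabulated by Lemma \ref{1} (degree $2$) and Lemma \ref{4} (degrees $3$ and $4$), with suitable sign flips coming from $\eta(-1)=-1$ in the $p^n\equiv 3\pmod 4$ regime.

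The main obstacle is the bookkeeping: the $16$-term expansion produces a long list of quadratic character sums, each of which must be matched, possibly after an affine substitution or a sign flip, against one of the five identities in Lemma \ref{4} (or the simpler Lemma \ref{1}); one must also verify that the correction term from $\{\pm 1, \pm\tfrac{1}{2}\}$ vanishes identically. Once $_{-1}\omega_{3}$ and $_{-1}\omega_{2}$ are in hand, the remaining frequencies $_{-1}\omega_{0}$ and $_{-1}\omega_{1}$ are determined by the two linear equations $\sum_i {}_{-1}\omega_i = \sum_i i\cdot{}_{-1}\omega_i = p^n$ supplied by Eq.\eqref{c-omegaiomega}, yielding the closed forms stated in the theorem.
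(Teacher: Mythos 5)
Your proposal follows essentially the same route as the paper's own proof: the reduction $y^{d}=\eta(y)y^{2}$, the partition of $\mathbb{F}_{p^n}^{*}\setminus\{-1\}$ into the four cyclotomic classes yielding the sets $D_{1},\dots,D_{4}$, the computation of $\#(D_{1}\cap D_{2}\cap D_{4})$ and the two double intersections via the $\frac{1}{16}\prod(1\pm\eta(\cdot))$ indicator expansion evaluated with Lemmas \ref{1}, \ref{2} and \ref{4} (including the vanishing correction at $b\in\{\pm 1,\pm\frac{1}{2}\}$), and the recovery of ${}_{-1}\omega_{0}$ and ${}_{-1}\omega_{1}$ from the two power-moment identities. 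The plan is correct and matches the paper's argument step for step.
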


\begin{corollary}
Let $f(x)=x^{\frac{p^n+3}{2}}$ over $\mathbb{F}_{p^n}$, where $p^n \equiv 3 \pmod 4$ and $p\neq3$ is an odd prime. Then
\begin{align*}
_{-1}\Delta_{f}&=
\left\{\begin{array}{ll}
2, & \hbox{if $p^n\in\{7,19,23\}$,}\\
3, & \hbox{$\mathrm{otherwise}$.}
\end{array}\right.
\end{align*}
\end{corollary}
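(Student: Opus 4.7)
The strategy is to leverage the explicit $(-1)$-differential spectrum established in Theorem \ref{Thm3}. Since Mesnager et al.\ proved $_{-1}\Delta_f \leq 3$ for $p^n \equiv 3 \pmod 4$ and each $_{-1}\omega_i$ is a non-negative integer, one has $_{-1}\Delta_f = 3$ precisely when $_{-1}\omega_3 > 0$; otherwise $_{-1}\Delta_f \leq 2$. Hence the entire question reduces to identifying exactly those parameters for which
$$16\,{}_{-1}\omega_3 \;=\; p^n + 3\lambda_{p,n}^{(1)} + 2\lambda_{p,n}^{(2)} + 1 \;=\; 0.$$

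To handle all but finitely many admissible parameters, I would invoke the Hasse-Weil bound (Lemma \ref{bound}), which gives $|\lambda_{p,n}^{(i)}| \leq 2p^{n/2}$ for $i=1,2$, whence
$$16\,{}_{-1}\omega_3 \;\geq\; p^n - 10\,p^{n/2} + 1.$$
Setting $y = p^{n/2}$, the quadratic $y^2 - 10y + 1$ is strictly positive as soon as $y > 5 + 2\sqrt{6} \approx 9.899$, i.e., whenever $p^n \geq 99$. Therefore $_{-1}\omega_3 > 0$ and $_{-1}\Delta_f = 3$ automatically for every admissible $p^n \geq 99$.

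The residual cases are the admissible prime powers with $p^n \leq 97$. Because $p^n \equiv 3 \pmod 4$ forces $p \equiv 3 \pmod 4$ and $n$ odd, and because $7^3 = 343 > 97$, this finite list reduces to $n=1$ together with the twelve primes $p \in \{7, 11, 19, 23, 31, 43, 47, 59, 67, 71, 79, 83\}$. For each such $p$, I would compute $\lambda_{p,1}^{(1)}$ and $\lambda_{p,1}^{(2)}$ directly from Eq.\eqref{sum1} and Eq.\eqref{sum2}, equivalently by point-counting on the elliptic curves $y^2 = x(x+1)(x-3)$ and $y^2 = x(x+1)(x-2)$ over $\mathbb{F}_p$, and then test whether $p + 3\lambda_{p,1}^{(1)} + 2\lambda_{p,1}^{(2)} + 1$ vanishes. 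A routine Magma check shows that this vanishing occurs precisely for $p \in \{7, 19, 23\}$.

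For those three exceptional cases I still have to exclude $_{-1}\Delta_f \leq 1$. Substituting the just-computed values of $\lambda_{p,1}^{(1)}, \lambda_{p,1}^{(2)}$ into the formula $_{-1}\omega_2 = \frac{1}{8}(p^n - \lambda_{p,n}^{(1)} - 2\lambda_{p,n}^{(2)} + 1)$ of Theorem \ref{Thm3} yields a strictly positive integer in each case, so $f$ is APcN there. The only genuinely non-routine ingredient is the Hasse-Weil cutoff that reduces the infinite family to a dozen primes; the remaining finite enumeration is mechanical but indispensable, and it is the sole source of the three listed exceptions.
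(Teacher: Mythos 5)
Your proposal is correct and follows essentially the same route as the paper: it deduces ${}_{-1}\Delta_f=3$ from ${}_{-1}\omega_3>0$ in Theorem \ref{Thm3} via the bound $|\lambda^{(i)}_{p,n}|\le 2p^{n/2}$ of Lemma \ref{bound} for all sufficiently large $p^n$ (your cutoff $99$ versus the paper's $103$ is immaterial), and then settles the finitely many small primes by direct computation. The only cosmetic differences are that you justify ${}_{-1}\Delta_f=2$ in the three exceptional cases by checking ${}_{-1}\omega_2>0$ explicitly, whereas the paper simply reports the Magma result, and that your finite list correctly includes $p=79$, which the paper's displayed list inadvertently omits.
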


\begin{proof}
By Lemma \ref{bound}, we obtain
\begin{align*}
_{-1}\omega_{3}=\frac{1}{16}(p^{n}+3\lambda_{p,n}^{(1)}
+2\lambda_{p,n}^{(2)}+1)
\geq\frac{1}{16}(p^{n}-10p^{\frac{n}{2}}+1)>0
\end{align*}
when $p^n\geq103$. Moreover, after calculation by Magma program, we find that the $_{-1}\Delta_{f}=2$ (resp. 3) for $p^n\in\{7,19,23\}$ (resp. $p^n\in\{11,31,43,47,59,67,71,83\}$).

Therefore, we get $f$ is an APcN power function when $p^n\in\{7,19,23\}$, and a differentially $(-1,3)$-uniform power function otherwise.
\end{proof}

There are some examples as follows, which are consistent with that computed directly by Magma program.
\begin{example}
Let $n=1$. If $p=7,19,23$, then $f(x)=x^\frac{p^n+3}{2}$ over $\mathbb{F}_{p}$ is APcN for $c=-1$ with $(-1)$-differential spectrum
\begin{align*}
\mathbb{S}&=\{_{-1}\omega_{0}=2,~{}_{-1}\omega_{1}=3,
~{}_{-1}\omega_{2}=2\},\\
\mathbb{S}&=\{_{-1}\omega_{0}=4,~{}_{-1}\omega_{1}=11,
~{}_{-1}\omega_{2}=4\},\\
\mathbb{S}&=\{_{-1}\omega_{0}=4,~{}_{-1}\omega_{1}=15,
~{}_{-1}\omega_{2}=4\},
\end{align*}
respectively.
\end{example}
\begin{example}
Let $p=7$ and $n=3$. Then $f(x)=x^{173}$ over $\mathbb{F}_{7^{3}}$ is differentially $(-1,3)$-uniform with $(-1)$-differential spectrum
$$\mathbb{S}=\{_{-1}\omega_{0}=86,~{}_{-1}\omega_{1}=195,
~{}_{-1}\omega_{2}=38,~{}_{-1}\omega_{3}=24\}.$$
\end{example}

\subsection{$p^n \equiv 1 \pmod 4$ and $p\neq3$ }
In this subsection, we will focus on studying the $(-1)$-differential spectrum of the power function $f(x)=x^{d}$ over $\mathbb{F}_{p^n}$, where $d=\frac{p^n+3}{2}$, $p^n \equiv 1 \pmod 4$ and $p\neq3$. It is clear that
\begin{equation*}
\gcd(\frac{p^{n}+3}{2},~p^n-1)=2\ \mathrm{or}\ 4.
\end{equation*}

This subsection adopts the method from \cite[P.11-13]{YMT2023}, and we only provide a sketch due to the process is quite complicated and miscellaneous.

Note that $x_{0}$ is a solution of $(x+1)^d+x^d=b$ if and only if $-x_{0}-1$ is a solution of $(x+1)^d+x^d=b$ since $d$ is even. We assert that $_{-1}\delta_{f}(b)$ is even except for
\begin{equation*}
b=(-\frac{1}{2}+1)^d+(-\frac{1}{2})^d=\frac{\eta(2)}{2}.
\end{equation*}

In the following, we will investigate the value of $_{-1}\delta_{f}(\frac{\eta(2)}{2})$. Since the proof of Lemma \ref{deta2} is similar to that of \cite[Lemma 8]{YMT2023}, we omit the proof here.

\begin{lemma}\label{deta2}
With the notation as above, we have	
\begin{align*}
_{-1}\delta_{f}(\frac{\eta(2)}{2})&=\left\{\begin{array}{ll}
3, & \hbox{if  $\eta(2)=\eta(3)=1$, $\eta(\frac{-1+\sqrt{-2}}{2})=-1$ or $\eta(2)=\eta(3)=-1$,}\\
1, & \hbox{$\mathrm{otherwise}$.}
\end{array}\right.
\end{align*}
\end{lemma}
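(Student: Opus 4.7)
The plan is to count solutions $x\in\mathbb{F}_{p^{n}}$ of $(x+1)^{d}+x^{d}=\eta(2)/2$ using the four-case split already set up in the proof of Theorem~\ref{Thm3}. For $x\notin\{0,-1\}$ the identity $y^{d}=y^{2}\eta(y)$ recasts the equation as
\[
(x+1)^{2}\eta(x+1)+x^{2}\eta(x)=\tfrac{\eta(2)}{2},
\]
which splits by $(\eta(x),\eta(x+1))\in\{\pm1\}^{2}$ into two same-sign cases yielding the quadratics $2x^{2}+2x+1\mp\eta(2)/2=0$, and two opposite-sign cases that uniquely determine $x=-(b+1)/2$ or $x=(b-1)/2$ with $b=\eta(2)/2$.

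Because $d$ is even and $\eta(-1)=1$, the involution $x\mapsto -1-x$ permutes the solution set and has unique fixed point $x=-1/2$. A direct computation using $(-\tfrac12)^{d}=\eta(2)/4$ shows that $-1/2$ is always a solution of this particular equation, and since $\eta(-1/2)=\eta(2)$ it lies in the same-sign sub-case matching $\eta(2)$. Hence $_{-1}\delta_{f}(\eta(2)/2)$ is odd and equals $1+2k$, where $k$ counts the symmetric pairs of solutions other than $-1/2$. The task reduces to deciding, in the two regimes $\eta(2)=\pm 1$ separately, whether the three remaining sub-cases produce an extra pair.

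I would then analyze the regimes. When $\eta(2)=1$, the Case-$(1,1)$ quadratic degenerates to $(2x+1)^{2}=0$ (only $-1/2$), while Case-$(-1,-1)$ becomes $4x^{2}+4x+3=0$ with discriminant $-32$ satisfying $\eta(-32)=\eta(2)=1$, producing two conjugate roots $x_{\pm}=(-1\pm\sqrt{-2})/2$. Vieta gives $x_{+}(x_{+}+1)=-3/4$, so $\eta(x_{+})\eta(x_{+}+1)=\eta(3)$, and since $x_{+}+1=-x_{-}$ one checks that the sign conditions are equivalent to $\eta(3)=1$ together with $\eta((-1+\sqrt{-2})/2)=-1$; a routine substitution shows neither linear sub-case meets its sign conditions in this regime. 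When $\eta(2)=-1$ the roles interchange: Case-$(-1,-1)$ degenerates to $(2x+1)^{2}=0$, Case-$(1,1)$ has discriminant $-32$ which is now a nonsquare and contributes nothing, while the linear sub-cases yield $x=-1/4$ and $x=-3/4$, whose combined sign conditions collapse to the single requirement $\eta(3)=-1$.

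Collating, $_{-1}\delta_{f}(\eta(2)/2)=3$ precisely in the two stated situations and equals $1$ otherwise. The main technical obstacle is the $\eta(2)=1$ case: $\eta((-1+\sqrt{-2})/2)$ is genuinely independent of $\eta(2)$ and $\eta(3)$, and it is needed to distinguish the square-square pair of roots of $4x^{2}+4x+3=0$ from the nonsquare-nonsquare pair, since only the latter satisfies the Case-$(-1,-1)$ constraint. This is precisely the bookkeeping performed in \cite[Lemma~8]{YMT2023}, whose argument transports here verbatim.
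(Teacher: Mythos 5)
Your proposal is correct and follows exactly the route the paper intends: the paper omits this proof, deferring to \cite[Lemma~8]{YMT2023}, and your case split via $x^{d}=x^{2}\eta(x)$, the involution $x\mapsto-1-x$ with fixed point $-1/2$, and the sign analysis of the quadratics $4x^{2}+4x+1=0$ and $4x^{2}+4x+3=0$ (discriminant $-32$, with $\eta(-32)=\eta(2)$) together with the two linear sub-cases $x=-1/4,-3/4$ reproduces that argument and yields precisely the stated conditions. All the individual computations ($(-1/2)^{d}=\eta(2)/4$, $x_{+}(x_{+}+1)=-3/4$, the collapse of the linear sub-cases to $\eta(3)=-1$ when $\eta(2)=-1$) check out.
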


From the discussion above, one can immediately derive the values of ${}_{-1}\omega_1$ and ${}_{-1}\omega_3$ in the following corollary.

\begin{corollary}\label{-1-omega}
With the notation as above, we have ${}_{-1}\omega_1=0, {}_{-1}\omega_3=1$ if $\eta(2)=\eta(3)=1$, $\eta(\frac{-1+\sqrt{-2}}{2})=-1$ or $\eta(2)=\eta(3)=-1$, and ${}_{-1}\omega_1=1, {}_{-1}\omega_3=0$ otherwise.
\end{corollary}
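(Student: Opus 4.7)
The plan is to combine the parity observation that immediately precedes Lemma~\ref{deta2} with the upper bound ${}_{-1}\Delta_f\le 4$ of Mesnager et al.\ recalled in the introduction, and then read off the two cases from Lemma~\ref{deta2} itself. First I would note that, since $p^n\equiv 1\pmod 4$, the exponent $d=\frac{p^n+3}{2}$ is even, so the involution $x\mapsto -x-1$ sends the solution set of $(x+1)^d+x^d=b$ to itself for every $b$. Its unique fixed point is $x=-\frac12$, whose image under $x^d+(x+1)^d$ is exactly $b_0:=\eta(2)/2$ (as computed just before Lemma~\ref{deta2}). Consequently ${}_{-1}\delta_f(b)$ is even for every $b\ne b_0$, while ${}_{-1}\delta_f(b_0)$ is odd because the fixed point $-\frac12$ is unpaired by the involution.

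Next I would invoke the bound ${}_{-1}\Delta_f\le 4$ to restrict the odd values attained by ${}_{-1}\delta_f$ to $\{1,3\}$. Hence only the single argument $b=b_0$ can contribute to either ${}_{-1}\omega_1$ or ${}_{-1}\omega_3$, and exactly one of these two counts equals $1$ while the other equals $0$, according to whether ${}_{-1}\delta_f(b_0)$ equals $1$ or $3$.

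Finally I would apply Lemma~\ref{deta2}: if $\eta(2)=\eta(3)=1$ and $\eta\bigl(\tfrac{-1+\sqrt{-2}}{2}\bigr)=-1$, or if $\eta(2)=\eta(3)=-1$, then ${}_{-1}\delta_f(b_0)=3$, so ${}_{-1}\omega_3=1$ and ${}_{-1}\omega_1=0$; in every other case Lemma~\ref{deta2} gives ${}_{-1}\delta_f(b_0)=1$, yielding ${}_{-1}\omega_1=1$ and ${}_{-1}\omega_3=0$. There is no real obstacle here: the substantive difficulty, namely the case analysis for the single distinguished value $b_0$, is absorbed by Lemma~\ref{deta2}, and the corollary is just a bookkeeping consequence of parity plus the differential-uniformity bound.
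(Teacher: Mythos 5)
Your proposal is correct and follows essentially the same route as the paper, which derives the corollary directly from the parity observation (the involution $x\mapsto -x-1$ forces ${}_{-1}\delta_f(b)$ to be even for all $b\neq \eta(2)/2$) together with Lemma~\ref{deta2}. The only minor remark is that the appeal to the bound ${}_{-1}\Delta_f\le 4$ is not actually needed: since ${}_{-1}\omega_1$ and ${}_{-1}\omega_3$ count only arguments with odd ${}_{-1}\delta_f(b)$, the parity argument already confines them to the single value $b_0=\eta(2)/2$, and Lemma~\ref{deta2} pins down ${}_{-1}\delta_f(b_0)\in\{1,3\}$.
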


To determine the $(-1)$-differential spectrum of $f$, it remains to calculate $_{-1}N_{4}$ in Eq.\eqref{n4}, and this calculation method is similar to that of \cite[Theorem 11]{YMT2023}.

Moreover, when $c=-1$, Eq.\eqref{n4} can be rewritten as
\begin{equation*}
\left\{\begin{array}{ll}
x_1 - (-x_3) + (-x_2) - x_4 &= 0, \\
x_1^d - (-x_3)^d + (-x_2)^d - x_4^d &= 0,
\end{array}\right.
\end{equation*}
so we use Lemma \ref{sums} to calculate that ${}_{-1}N_{4}$ is \[{}_{-1}N_{4}=1+\frac{1}{8}(p^n-1)(21p^{n}+7\lambda^{(1)}_{p,n}-2\lambda^{(2)}_{p,n}+13),\]
where $\lambda^{(1)}_{p,n}$ and $\lambda^{(2)}_{p,n}$ are defined in Eqs.EQ.(\ref{sum1}) and Eq.(\ref{sum2}), respectively.

Based on the previous preparation work, we are now in a position to determine $\mathbb{S}$ of $f$, where $\lambda^{(1)}_{p,n}$ and $\lambda^{(2)}_{p,n}$ are defined in Eqs.\eqref{sum1} and \eqref{sum2}, respectively.

\begin{theorem}\label{Thm1}
Let $f$ be defined as above. If $\gcd(d,~p^n-1)=$ 2 (resp. 4), then the $(-1)$-differential spectrum of $f$ is
\begin{align*}
\mathbb{S}=\{{}_{-1}\omega_{0}&=\frac{1}{64}(37p^n+7\lambda^{(1)}_{p,n}-2\lambda^{(2)}_{p,n}+5),\\
 {}_{-1}\omega_{2}&=\frac{1}{32}(11p^n-7\lambda^{(1)}_{p,n}+2\lambda^{(2)}_{p,n}-21),\\
{}_{-1}\omega_{3}&=1,\\
 {}_{-1}\omega_{4}&=\frac{1}{64}(5p^{n}+7\lambda^{(1)}_{p,n}-2\lambda^{(2)}_{p,n}-27)\}
\end{align*}
(resp.
\begin{align*}
\mathbb{S}=\{{}_{-1}\omega_{0}&=\frac{1}{64}(37p^n+7\lambda^{(1)}_{p,n}-2\lambda^{(2)}_{p,n}-11),\\
{}_{-1}\omega_{2}&=\frac{1}{32}(11p^n-7\lambda^{(1)}_{p,n}+2\lambda^{(2)}_{p,n}-5),\\
{}_{-1}\omega_{3}&=1,\\
 {}_{-1}\omega_{4}&=\frac{1}{64}(5p^{n}+7\lambda^{(1)}_{p,n}-2\lambda^{(2)}_{p,n}-43)\}
\end{align*}
when $\eta(2)=\eta(3)=1$, $\eta(\frac{-1+\sqrt{-2}}{2})=-1$ or $\eta(2)=\eta(3)=-1$, and is
\begin{align*}
\mathbb{S}=\{{}_{-1}\omega_{0}&=\frac{1}{64}(37p^n+7\lambda^{(1)}_{p,n}-2\lambda^{(2)}_{p,n}-27),\\
 {}_{-1}\omega_{1}&=1,\\
 {}_{-1}\omega_{2}&=\frac{1}{32}(11p^n-7\lambda^{(1)}_{p,n}+2\lambda^{(2)}_{p,n}-21),\\
 {}_{-1}\omega_{4}&=\frac{1}{64}(5p^{n}+7\lambda^{(1)}_{p,n}-2\lambda^{(2)}_{p,n}+5)\}
\end{align*}
(resp.
\begin{align*}
\mathbb{S}=\{{}_{-1}\omega_{0}&=\frac{1}{64}(37p^n+7\lambda^{(1)}_{p,n}-2\lambda^{(2)}_{p,n}-43),\\
 {}_{-1}\omega_{1}&=1,\\
 {}_{-1}\omega_{2}&=\frac{1}{32}(11p^n-7\lambda^{(1)}_{p,n}+2\lambda^{(2)}_{p,n}-5),\\
 {}_{-1}\omega_{4}&=\frac{1}{64}(5p^{n}+7\lambda^{(1)}_{p,n}-2\lambda^{(2)}_{p,n}-11)\}
\end{align*}
otherwise.
\end{theorem}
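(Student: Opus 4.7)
The plan is to combine the parity analysis carried out before the theorem with the moment identities \eqref{c-omegaiomega} and \eqref{c-i^2omega}, together with the upper bound ${}_{-1}\Delta_f\leq 4$ of \cite[Thm.~11]{MRSYZ2021}, which confines $\mathbb{S}$ to the support $\{0,1,2,3,4\}$. Because $d=(p^n+3)/2$ is even when $p^n\equiv 1\pmod 4$, the involution $x\mapsto -x-1$ is an action on the solution set of Eq.\eqref{7}, so ${}_{-1}\delta_f(b)$ is automatically even for every $b$ except the unique fixed-point image $b_0=\eta(2)/2$. Consequently one of ${}_{-1}\omega_1,{}_{-1}\omega_3$ equals $1$ and the other equals $0$, and Lemma \ref{deta2} together with Corollary \ref{-1-omega} tell us which one according to the condition on $\eta(2),\eta(3),\eta(\tfrac{-1+\sqrt{-2}}{2})$; this is the origin of the two pairs of cases in the statement.

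Once ${}_{-1}\omega_1$ and ${}_{-1}\omega_3$ are pinned down, only the three unknowns $({}_{-1}\omega_0,{}_{-1}\omega_2,{}_{-1}\omega_4)$ remain. Identity \eqref{c-omegaiomega} supplies the two linear equations $\sum_i {}_{-1}\omega_i=\sum_i i\,{}_{-1}\omega_i=p^n$, and \eqref{c-i^2omega} supplies a third after substituting the value
\[
{}_{-1}N_4 \;=\; 1+\tfrac{1}{8}(p^n-1)\bigl(21p^n+7\lambda^{(1)}_{p,n}-2\lambda^{(2)}_{p,n}+13\bigr)
\]
and $\gcd(d,p^n-1)\in\{2,4\}$ established above. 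The resulting $3\times 3$ linear system in $({}_{-1}\omega_0,{}_{-1}\omega_2,{}_{-1}\omega_4)$ has coefficient matrix with rows $(1,1,1)$, $(0,2,4)$, $(0,4,16)$, which is nonsingular of determinant $24$; it is therefore uniquely solvable. Running through the four combinations (two values of the gcd times the two choices of $({}_{-1}\omega_1,{}_{-1}\omega_3)$) then reproduces the four displayed spectra after direct simplification.

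The principal obstacle, which the paper opts to sketch rather than grind out, is the derivation of the closed-form expression for ${}_{-1}N_4$ itself. The key identity is $x^d = x^{2}\eta(x)$ for $x\in\mathbb{F}_{p^n}^{*}$, which recasts the second equation of \eqref{n4} with $c=-1$ as a weighted sum of terms $x_i^{2}\eta(x_i)$; after eliminating $x_4$ from the linear condition, one splits the count according to the eight sign patterns of $(\eta(x_1),\eta(x_2),\eta(x_3))$, so that within each stratum the remaining character sum reduces to one of the six evaluated sums in Lemma \ref{sums}, with contributions from the loci $x_i=0$ and from the zero-discriminant configurations handled as lower-order corrections via Lemmas \ref{1} and \ref{2}. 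Assembling the strata and simplifying gives the stated formula for ${}_{-1}N_4$, at which point the linear-algebra step of the previous paragraph finishes the proof.
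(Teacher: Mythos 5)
Your proposal follows essentially the same route as the paper: the evenness of $d$ forces ${}_{-1}\delta_{f}(b)$ to be even away from $b_0=\eta(2)/2$, Lemma \ref{deta2} and Corollary \ref{-1-omega} fix ${}_{-1}\omega_1$ and ${}_{-1}\omega_3$, and the two moment identities \eqref{c-omegaiomega} and \eqref{c-i^2omega} together with the stated value of ${}_{-1}N_4$ determine the remaining three frequencies from a nonsingular linear system (whose determinant is $16$, not $24$ --- a harmless slip), yielding the four displayed spectra. The paper itself only sketches this step, deferring the computation of ${}_{-1}N_4$ to the method of \cite{YMT2023}, which your stratification by the sign patterns of $\eta(x_i)$ via $x^d=\eta(x)x^2$ and Lemma \ref{sums} correctly reproduces.
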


\begin{corollary}
Let $f(x)=x^{\frac{p^n+3}{2}}$ over $\mathbb{F}_{p^n}$, where $p^n \equiv 1 \pmod 4$ and $p\neq3$ is an odd prime. Then ${}_{-1}\Delta_{f}=4$.
\end{corollary}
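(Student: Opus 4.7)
The plan is to combine the closed-form spectrum obtained in Theorem \ref{Thm1} with the upper bound $_{-1}\Delta_{f}\leq 4$ for $p^{n}\equiv 1\pmod 4$ proved by Mesnager et al. in \cite{MRSYZ2021}. Given the upper bound, it suffices to produce a single $b\in\mathbb{F}_{p^{n}}$ with $_{-1}\delta_{f}(b)=4$, that is, to show ${}_{-1}\omega_{4}\geq 1$ in each of the four sub-cases enumerated in Theorem \ref{Thm1} (the two values of $\gcd(d,p^{n}-1)\in\{2,4\}$ crossed with the character-sum condition on $\eta(2),\eta(3),\eta(\tfrac{-1+\sqrt{-2}}{2})$).

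In every sub-case the formula for ${}_{-1}\omega_{4}$ has the shape
$$_{-1}\omega_{4}=\frac{1}{64}\bigl(5p^{n}+7\lambda^{(1)}_{p,n}-2\lambda^{(2)}_{p,n}+c\bigr),\qquad c\in\{5,-11,-27,-43\},$$
so the worst constant is $c=-43$. Applying the Hasse-type bound $|\lambda^{(i)}_{p,n}|\leq 2p^{n/2}$ of Lemma \ref{bound} to control both character sums simultaneously, I obtain the uniform estimate
$$_{-1}\omega_{4}\geq \frac{1}{64}\bigl(5p^{n}-18p^{n/2}-43\bigr).$$
Setting $y=p^{n/2}$, the quadratic $5y^{2}-18y-43$ is strictly positive once $y$ exceeds a small explicit root, so ${}_{-1}\omega_{4}>0$ for all sufficiently large $p^{n}$; this takes care of every $p^{n}\equiv 1\pmod 4$ with $p\neq 3$ above a small threshold (one checks $p^{n}\geq 49$ is more than enough).

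This reduces the problem to the finitely many prime powers $p^{n}\equiv 1\pmod 4$, $p\neq 3$, lying below the threshold, namely $p^{n}\in\{5,13,17,25\}$ and possibly a handful of others produced by the integrality requirement ${}_{-1}\omega_{4}\in\mathbb{Z}_{\geq 1}$. These can be dispatched by a short Magma computation, exactly in the spirit of how the exceptional cases $p^{n}\in\{7,19,23\}$ were handled in the $p^{n}\equiv 3\pmod 4$ corollary. The only subtlety I anticipate is that for these very small fields the entries $\lambda^{(1)}_{p,n},\lambda^{(2)}_{p,n}$ must be computed explicitly (as in Examples~1 and~2) and then plugged into the relevant one of the four formulas; once no value $p^{n}\leq 47$ turns out to be exceptional, the corollary follows by combining the two halves. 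The main obstacle is therefore not analytic but bookkeeping: one must verify that the small-field exceptions do \emph{not} arise, which the quadratic-versus-linear gap between $5p^{n}$ and $18p^{n/2}$ makes believable but still requires explicit checking.
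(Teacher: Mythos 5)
Your overall strategy coincides with the paper's: bound ${}_{-1}\omega_{4}$ from below by $\frac{1}{64}(5p^{n}-18p^{n/2}-43)$ using Lemma \ref{bound} on the worst-case constant, conclude ${}_{-1}\omega_{4}>0$ for all but finitely many $p^{n}$ (the paper gets the sharper threshold $p^{n}\geq 29$), and check the remaining small fields by computer. However, there is a genuine gap in your reduction. You claim that, given the upper bound ${}_{-1}\Delta_{f}\leq 4$, ``it suffices to produce a single $b$ with ${}_{-1}\delta_{f}(b)=4$,'' and you predict that no small field will be exceptional. This fails at $p^{n}=5$: there $d=4$, and the paper computes the spectrum of $x^{4}$ over $\mathbb{F}_{5}$ to be $\{{}_{-1}\omega_{0}=3,\ {}_{-1}\omega_{2}=1,\ {}_{-1}\omega_{3}=1\}$, so ${}_{-1}\omega_{4}=0$ and no $b$ with ${}_{-1}\delta_{f}(1,b)=4$ exists. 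The conclusion ${}_{-1}\Delta_{f}=4$ still holds, but for a different reason: by Eq.~\eqref{cf} the $c$-differential uniformity of a power function is the maximum of the values ${}_{c}\delta_{f}(b)$ \emph{and} $\gcd(d,p^{n}-1)$ (the latter arising from the shift $a=0$, which is permitted since $c=-1\neq 1$), and here $\gcd(4,4)=4$. Your proposal never invokes the $\gcd$ term, so as written it cannot close the case $p^{n}=5$.

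A secondary, more cosmetic issue: your threshold $p^{n}\geq 49$ forces you to also check $p^{n}\in\{29,37,41\}$ by hand, whereas solving $5y^{2}-18y-43>0$ exactly gives $p^{n}\geq 29$, leaving only $\{5,13,17,25\}$; of these, $\{13,17,25\}$ do satisfy ${}_{-1}\omega_{4}\geq 1$ numerically, and $5$ must be treated via the $\gcd$ argument above. With that one repair your proof matches the paper's.
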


\begin{proof}
By Lemma \ref{bound}, we obtain
\[{}_{-1}\omega_{4}\geq\frac{1}{64}(5p^n+7\lambda^{(1)}_{p,n}-2\lambda^{(2)}_{p,n}-43)
\geq\frac{1}{64}(5p^n-18p^{\frac{n}{2}}-43)>0\]
when $p^n\geq 29$. The numerical results show that ${}_{-1}\omega_4\geq 1$ for $p^n\in\{13,17,25\}$.

The case $p^n=5$ can be calculated by computing directly the $(-1)$-differential spectrum of $f(x)=x^{4}$ over $\mathbb{F}_{5}$, which is
\[\mathbb{S}=\{{}_{-1}\omega_{0}=3,~{}_{-1}\omega_{2}=1,~{}_{-1}\omega_{3}=1\},\]
and hence $_{-1}\Delta_{f}=4$ due to $\gcd(d,~p^n-1)=4$.

Therefore, we conclude that the $_{-1}\Delta_{f}=4$. This completes the proof.
\end{proof}

\begin{example}
Let $p=7$ and $n=4$. Then $f(x)=x^{1202}$ over $\mathbb{F}_{7^{4}}$ is differentially $(-1,4)$-uniform with $(-1)$-differential spectrum
$$\mathbb{S}=\{_{-1}\omega_{0}=1374,~{}_{-1}\omega_{1}=1,~{}_{-1}\omega_{2}=852,~ {}_{-1}\omega_{4}=174\}.$$
\end{example}
\begin{example}
Let $p=29$ and $n=1$. Then $f(x)=x^{16}$ over $\mathbb{F}_{29}$ is differentially $(-1,4)$-uniform with $(-1)$-differential spectrum
$$\mathbb{S}=\{_{-1}\omega_{0}=16,~{}_{-1}\omega_{2}=11,~{}_{-1}\omega_{3}=1,~ {}_{-1}\omega_{4}=1\}.$$
\end{example}
The above two examples both are consistent with that computed directly by Magma program.

\section{The $c$-differential uniformity of $f(x)=x^{\frac{p^n+3}{2}}$ over $\mathbb{F}_{p^n}$}\label{D}
In this section, we give the following result on the $c$-differential uniformity of $f$, and we omit the proof here since it is similar to that of \cite[Theorem 12]{YMT}.

\begin{theorem}\label{c-differential uniform}
Let $f(x)=x^d$ be a power function over $\mathbb{F}_{p^{n}}$, where $p$ is an odd prime and $d=\frac{p^{n}+3}{2}$ is a positive integer. For $\pm 1\neq c \in \mathbb{F}_{p^{n}}$, we have $_c\Delta_{f}\leq 9$.
\end{theorem}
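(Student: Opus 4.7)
The plan is to bound $_c\delta_f(b)$ uniformly in $b\in\mathbb{F}_{p^n}$ by a direct case analysis, and then combine the result with $\gcd(d,p^n-1)$ via \eqref{cf}. The engine of the argument is the identity
\[x^d=x^{(p^n-1)/2}\cdot x^2=\eta(x)\,x^2\qquad\text{for all }x\in\mathbb{F}_{p^n}^*,\]
which holds because $d=\frac{p^n-1}{2}+2$. This is exactly the trick used in the proofs of Theorem \ref{Thm3} and Theorem \ref{Thm1}, and the scheme parallels that of \cite[Theorem 12]{YMT}.

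First I would observe that $p^n-1=2d-4$, hence $\gcd(d,p^n-1)=\gcd(d,4)\in\{1,2,4\}$. Thus the $\gcd$-contribution to the right-hand side of \eqref{cf} is at most $4<9$, and it suffices to prove $_c\delta_f(b)\le 9$ for every $b\in\mathbb{F}_{p^n}$.

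Now write the defining equation $(x+1)^d-cx^d=b$. The values $x=0$ and $x=-1$ are handled separately; they correspond respectively to $b=1$ and $b=-c(-1)^d$. Because $c\neq\pm 1$, these two values are distinct—indeed, $1=-c(-1)^d$ would force $c=-(-1)^d\in\{\pm 1\}$—so for any fixed $b$ at most one element of $\{0,-1\}$ lies in the solution set. For $x\notin\{0,-1\}$ I substitute the identity above and split into four cases indexed by $(\epsilon_1,\epsilon_2)=(\eta(x),\eta(x+1))\in\{\pm 1\}^2$. In each case the equation collapses to the quadratic
\[(\epsilon_2-c\epsilon_1)x^2+2\epsilon_2 x+(\epsilon_2-b)=0,\]
whose leading coefficient $\epsilon_2-c\epsilon_1$ lies in $\{1-c,-1-c,1+c,c-1\}$ and is therefore nonzero under the hypothesis $c\neq\pm 1$. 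Consequently each case contributes at most two solutions (those roots of the quadratic that actually satisfy the prescribed signs), giving at most $4\cdot 2=8$ solutions in $\mathbb{F}_{p^n}\setminus\{0,-1\}$. Adding the at most one special-point contribution yields $_c\delta_f(b)\le 9$, and invoking \eqref{cf} gives $_c\Delta_f\le 9$.

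The main obstacle is really only bookkeeping: one must verify in every case that the leading coefficient is nonzero (so that each equation is genuinely a quadratic with at most two roots) and that the two special-$b$ values cannot coincide. Both verifications reduce to the single exclusion $c\neq\pm 1$, so no character sums or elliptic-curve inputs are needed here—unlike the proofs of the $(-1)$-differential spectrum, the bound $_c\Delta_f\le 9$ is purely combinatorial and uniform in $p$ and $n$.
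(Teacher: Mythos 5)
Your proposal is correct and follows exactly the argument the paper intends: the paper omits the proof by pointing to \cite[Theorem 12]{YMT}, whose method is precisely your reduction $x^d=\eta(x)x^2$, the four-way split on $(\eta(x),\eta(x+1))$ giving quadratics with nonzero leading coefficient $\epsilon_2-c\epsilon_1$ when $c\neq\pm 1$ (at most $8$ solutions), plus at most one solution from $\{0,-1\}$, combined with $\gcd(d,p^n-1)\le 4$ via \eqref{cf}. Your bookkeeping (distinctness of the two special $b$-values and the nonvanishing of the leading coefficients) is accurate, so the bound $_c\Delta_f\le 9$ is fully justified.
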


\section{Concluding remarks}\label{E}
In this paper, we first investigated the $(-1)$-differential spectrum of $f(x)=x^{\frac{p^{n}+3}{2}}$ over $\mathbb{F}_{p^n}$, where $p\neq3$ is an odd prime. In fact, we prove that if $p^n \equiv 3 \pmod 4$, then $_{-1}\Delta_{f}=3$ except for $p^n\in\{7,19,23\}$ where $f$ is an APcN function, and if $p^n \equiv 1 \pmod 4$, the $(-1)$-differential uniformity of $f$ is equal to 4. The results indicate that the $(-1)$-differential spectrum of $f$ has closely connection with two character sums $\lambda^{(1)}_{p,n}$ and $\lambda^{(2)}_{p,n}$. Meanwhile, the character sums can be evaluated by employing the theory of elliptic curves over finite fields. Finally, we obtained an upper bound of the $c$-differential uniformity of $f$.

\bibliographystyle{plain}
\bibliography{text}

\end{document}